\documentclass[]{llncs}
\usepackage{float}
\floatstyle{boxed}
\newfloat{figure}{h}{ext}
\floatname{figure}{Figure}
\usepackage{graphicx} 
\usepackage{amssymb}
\usepackage{pstricks}
\usepackage{multirow}

\newcommand{\lang}{{\cal L}}

\newcommand{\dlvrd}{{\tt dlvrd}}
\newcommand{\rr}{{\tt rr}}
\newcommand{\kc}{{\tt kc}}

\newcommand{\sender}{{\tt sender}}

\newcommand{\rcvd}{{\tt rcvd}}

\newcommand{\conflictfree}{{\tt conflict\mbox{-}free}}

\newcommand{\rimp}{\Rightarrow}
\newcommand{\dimp}{\Leftrightarrow}

\newcommand{\slotrequest}{{\tt slot\mbox{-}request}}
\newcommand{\msg}{{\tt message}}
\newcommand{\conflict}{\tt conflict}

\newcommand{\be}{\begin{enumerate}}
\newcommand{\ee}{\end{enumerate}}

\newcommand{\xor}{\otimes}
\newcommand{\keys}{{\it keys}}
\newcommand{\rand}{{\it rand}}
\newcommand{\broadcast}{{\it broadcast}}
\newcommand{\omap}{{\it ov}}
\newcommand{\Agt}{{\it Agt}}
\newcommand{\Var}{{\it Var}}

\newcommand{\xm}{\xor m}

\begin{document}

\title{Abstraction for Epistemic Model Checking of Dining Cryptographers-based 
Protocols
\thanks{ This material is based on research sponsored by the Air Force Research Laboratory, under agreement number FA2386-09-1-4156. The U.S. Government is authorized to reproduce and distribute reprints for Governmental purposes notwithstanding any copyright notation thereon. The views and conclusions contained herein are those of the authors and should not be interpreted as necessarily representing the official policies or endorsements, either expressed or implied, of the Air Force Research Laboratory or the U.S. Government.
\today}}

\author{Omar I. Al-Bataineh and Ron van der Meyden}
\institute{School of Computer Science and Engineering,\\
University of New South Wales}
\maketitle

\begin{abstract}
The paper describes an abstraction for protocols that are based on multiple rounds of 
Chaum's Dining Cryptographers protocol. It is proved that the abstraction preserves a rich class of 
specifications in the logic of knowledge, including specifications 
describing what an agent knows about other agents' knowledge. 
This result can be used to optimize model checking of 
Dining Cryptographers-based protocols, and applied within a methodology 
for knowledge-based program implementation and verification. 
Some case studies of such an application are given,  
for a protocol that uses the Dining Cryptographers protocol 
as a primitive in an anonymous broadcast system. 
Performance results are given for model checking 
knowledge-based specifications in the concrete and abstract models of this protocol, and some
new conclusions about the protocol are derived. 
\end{abstract}

\section{Introduction}

Relations of abstraction (and their converse, refinement) 
are valuable tools for program verification. In this approach, 
we relate a (structurally complex) concrete program to a (simpler) abstract program
by means of a relation that is known to preserve the properties that we wish to verify 
in the concrete program. When such a relation can be shown to hold, 
we are able to verify these properties in the concrete program by 
showing that they hold in the abstract program, which is 
generally easier in view of the lesser structural complexity of the 
abstract program. In particular, model checkers can be expected to 
run more efficiently on the abstract program than on the concrete program, 
and abstraction is often used to bring the verification problem within the 
bounds of feasibility  for model checking. 
Conversely, starting with the abstract program, and 
having verified that this satisfies the desired properties, 
we may derive the concrete program and conclude that this also satisfies 
these properties. This perspective is the basis for ``correctness-by-construction" or top-down refinement 
approaches to program verification.  

Our contribution in this paper is to establish the correctness of an abstraction relation 
for abstract programs based on use a trusted third party for anonymous broadcast, 
which is implemented in the related concrete programs by means of the {\em Dining Cryptographers} protocol 
proposed by Chaum \cite{chaum}. That Chaum's  protocol implements anonymous 
broadcast is, of course, well-known, but we show that this statement holds in a more
general sense than is usually considered in the literature, where the focus is generally on the very particular property of anonymity. 
Specifically, we consider a broad class of properties formulated in the logic of knowledge, including properties in which agent knowledge is nested, 
such as ``Alice knows that Bob knows that $p$". We show that the abstraction relation 
between programs based on the trusted third party and programs based on 
the Dining Cryptographers protocol preserves all properties from this class. 

As an application of this result, we consider a protocol from Chaum's paper \cite{chaum} 
that uses multiple rounds of the Dining Cryptographers protocol to build a more general 
anonymous broadcast system. We have previously studied this protocol from the 
perspective of  a model checking based methodology for the implementation of 
knowledge-based programs \cite{AlBatainehMeyden10}, by treating the specification of the 
protocol as a knowledge-based program containing nested knowledge formulas. 

Knowledge-based programs \cite{FHMVbook} are an abstract, program-like form of specification, 
that describe how an agent's actions are related to conditions stated in terms of the agent's knowledge. The advantage of this 
level of abstraction is that it provides a highly intuitive description of the intentions 
of the programmer, that has been argued to be easier to verify than the complex 
implementations one typically finds for highly optimized distributed programs \cite{HZ92,FHMVbook}. 
Knowledge-based programs cannot be directly implemented, however, so they 
must be {\em implemented} by concrete programs in which the 
knowledge  conditions are replaced by concrete predicates of the agent's local state. 
The implementation relation between a knowledge-based program and a putative implementation 
holds when these concrete predicates are equivalent to the knowledge formulas that 
they replace (interpreted with respect to the system generated by running the putative implementation). 
Our partially-automated methodology for the implementation of knowledge-based programs 
uses a model checker for the logic of knowledge  to check whether this equivalence holds, 
and if it does not, uses the counter-examples generated by the model checker to 
generate a revised putative implementation. (This process is iterated until an implementation is found.) 

In our previous work on the application of this methodology, we considered 
model checking problems generated in this way from a knowledge-based 
program based on multiple rounds of the Dining Cryptographers protocol. 
Our experience was that the model checking problems we considered  
were close to the bounds of feasibility for our model checker even for instances with 
small numbers of agents, and we were prevented from considering instances of scale as a result. 
In the present paper, we apply our abstraction result in order to optimize the 
model checking problem, by performing model checking on the abstracted 
(trusted third party) version of the programs we consider rather than the concrete 
(Dining Cryptographers based) versions.  We give performance results
showing the difference, which indicate that the abstraction is effective in 
reducing the model checking runtime by several  orders of magnitude, enabling 
systems involving larger numbers of rounds of the  Dining Cryptographers protocol
and larger numbers of agents to be model checked. We use the efficiency gains to extend 
our previous analysis of the knowledge based program to larger numbers of agents, 
leading to an improved understanding of its implementations. 

The structure of the paper is as follows. We begin in Section~\ref{sec:knowledge}  
by introducing the logic of knowledge, which provides the specification language for the
properties that are preserved by our abstraction technique, and give its semantics in 
terms of a class of Kripke structures. We define a notion of bisimulation
on these Kripke structures that provides the semantic basis for our program abstraction technique. 
In Section~\ref{sec:programs}, we introduce a simple programming language 
used to represent our concrete and abstract programs.  In Section~\ref{sec:DC}, we 
introduce the Dining Cryptographers protocol and, in Section~\ref{sec:DCa},   its abstraction using a trusted 
third party. In Section~\ref{sec:abstraction} we state and prove correct the abstraction relation.  
The remainder of the paper deals with our application of this result. 
We recall the two-phase protocol in Section~\ref{sec:rounds}. In Section~\ref{sec:kbp} we describe knowledge-based programs 
and an approach to the use of model checking to identify their implementations. 
In Section~\ref{sec:dckbp} we recall our formulation of the two-phase protocol 
as a knowledge-based program and describe the associated verification conditions. 
Section~\ref{sec:results} discusses the comparative performance of
model checking in the concrete and abstract models when using the model checker MCK. 
We highlight some of the interesting conclusions we are able to make about implementations 
of the knowledge-based program for the round-based protocol in Section~\ref{sec:imp-results}. 
We discuss related work in Section~\ref{sec:related}. Finally, in Section~\ref{sec:concl},  we draw some conclusions and 
discuss future directions.  

\section{Epistemic Logic and Bisimulations}\label{sec:knowledge} 

Epistemic logics are a class of modal logics that
include operators whose meaning concerns
the information available to agents in a 
distributed or multi-agent system. 
In epistemic model checking, one is generally concerned
with the combination of such operators with 
temporal operators, and a semantics using 
a class of structures  known in the literature as  {\em interpreted 
systems} \cite{FHMVbook} that combines temporal and epistemic 
expressiveness. We focus here on a simpler framework that omits temporal operators, 
since we are mostly interested, in our application, on what knowledge agents have after some program has run, 
and this also simplifies the statement and proof of our results. 

Suppose that we are interested in systems comprised of  agents
from a set $\Agt$ whose states are described using a set $\Var$ of 
boolean variables.\footnote{We use the term ``variable" rather than ``proposition" in this paper, since our
atomic propositions arise as boolean variables in a program.} 
The syntax of the  logic of knowledge $\lang_{(\Var, \Agt)}$
is given by the following grammar: 
\[
\phi ::= \top \mid v \mid \neg \phi \mid \phi \wedge \phi \mid K_i\phi 
\]
where $v\in \Var$ is a variable and $i\in \Agt$ is an agent. 
(We freely use standard boolean operators that can be defined 
using the two given.) Intuitively, the meaning of $K_i\phi$ 
is that agent $i$ knows that $\phi$ is true. 

The semantics for the language is given in terms of {\em Kripke structures} 
of the form  
$M = (\Agt, W, \{\sim_i\}_{i\in \Agt}, \Var, \pi)$, 
where 
\be 
\item $\Agt$ is the set of agents, 
\item $W$ is a set of worlds, or situations,  
\item for each $i\in \Agt$, $\sim_i$ is an equivalence relation on $W$,
\item $\Var$ is a set of variables, 
\item $\pi: W\times\Var \rightarrow \{0,1\}$ is a valuation. 
\ee 
Intuitively, $W$ is the set of situations that the agents
consider that they could be in, and $w\sim_iw'$ if, 
when the actual situation is $w$, agent $i$ considers it possible that 
they are in situation $w'$. The value $\pi(w,v)$ is the truth value of variable
$v$ in situation $w$. Such a Kripke structure $M$ is {\em fit} for the 
language $\lang_{(\Var',\Agt')}$ if $\Agt'\subseteq \Agt$ and 
$\Var'\subseteq \Var$.  The semantics of the language is given by the 
relation $M,w\models \phi$, where $M$ is a Kripke structure fit for $\lang_{(\Var,\Agt)}$, $w$ 
is a world of $M$, 
and $\phi$ is a formula, meaning intuitively that the formula $\phi$ holds at the 
world $w$. The definition is given inductively by 
\be 
\item $M,w\models v$ if $\pi(w,v)=1$, for $v\in \Var$. 
\item $M,w\models \neg \phi$ if not $M,w\models \phi$, 
\item $M,w\models \phi_1\land \phi_2$ if $M,w\models \phi_1$ and $M,w\models \phi_2$, 
\item $M,w\models K_i\phi$ if $M,w'\models \phi$ for all $w'\in W$ with $w\sim_i w'$, for $i\in \Agt$. 
\ee
Intuitively, the final clause says that 
agent $i$ knows $\phi$ if it does not consider it possible that not $\phi$. 
We write $M\models \phi$, and say that $\phi$ is {\em valid} in $M$,   if $M,w\models \phi$ for
all $w\in W$. The {\em Kripke structure model checking} problem  is to compute, given $M$ and $\phi$, 
whether $M\models \phi$.  We will use this formulation of the model checking problem as the basis for another 
notion of model checking, to be introduced below, that concerns a way of generating $M$ from a program. 

One of the difficulties to be faced in model checking, the {\em state space explosion} problem, 
is the potentially large size of the set of worlds $W$ of the structures $M$ of interest. 
Abstractions are useful techniques for mitigating
state space explosion problem. They are often applied 
as a preliminary step to model checking. 
Systems often encode details
that are irrelevant to the properties that we aim
to verify. Abstraction techniques enable us to eliminate such unnecessary,  redundant details. 
However, abstractions must be sound, in the sense that properties that hold in the abstract model must
also hold in the concrete model.

For Kripke structures, {\em bisimulations}
may provide an 
effective way to simplify redundant structure 
while preserving properties of interest. We formulate here a 
version that is suited to our application, in which we allow both the 
set of agents and the set of propositions to vary  in the structures 
we consider. 

Suppose we are given a set of variables $\Var$, a set of agents $\Agt$, 
and two Kripke structures 
$$M = (\Agt^M, W^M, \{\sim^M_i\}_{i\in \Agt^M}, \Var^M, \pi^M)$$  
and  $$N = (\Agt^N, W^N, \{\sim^N_i\}_{i\in \Agt^N}, \Var^N, \pi^N)$$ 
such that $\Agt \subseteq \Agt_M\cap  \Agt_N$ and $\Var \subseteq \Var^M \cap \Var^N$. 
(Note that these conditions imply that both $M$ and $N$ are fit for $\lang_{(\Var,\Agt)}$.) 
A $(\Var,\Agt)$-bisimulation $\Re$ between $M$ and $N$ is defined to be a binary relation
$\Re \subseteq W^M \times W^{N}$ such that:
\begin{enumerate}
\item \textbf{Atoms:} $\pi^{M} (w, v) = \pi^{N} (w', v)$ whenever $w \Re w'$ and $v\in \Var$; 

\item  \textbf{Forth:}  if $i\in \Agt$,  and $w_{1}, w_{2}$ are two worlds in $M$ and $u_{1}$ is a world in $N$
such that $w_{1} \sim^M_{i} w_{2}$ and $w_{1} \Re u_{1}$, then there is
a world  $u_{2} \in W_{N}$ such that $u_{1} \sim^N_{i} u_{2}$
and $ w_{2} \Re u_{2}$; and 
\item \textbf{Back:}  if  $i\in \Agt$ and $u_{1}, u_{2}$ are two worlds in $N$ and $w_{1}$ is a world in $M$
such that $u_{1} \sim^N_{i} u_{2}$ and $u_{1} \Re w_{1}$, then there is
a $w_{2} \in W_{M}$ such that $w_{1} \sim^M_{i} w_{2}$
and $ u_{2} \Re w_{2}$.

\end{enumerate}
If there exists an $(\Var,\Agt)$-bisimulation $\Re$ between $M$ and $N$ such that $w \Re u$,
then we write $(M,w) \approx_{(\Var,\Agt)} (N, u)$. 
If there  exists an $(\Var,\Agt)$-bisimulation $\Re$ between $M$ and $N$ such that 
for every $u\in W^M$ there exists $w\in W^N$ such that $u\Re w$ and, conversely, 
for every $w\in W^N$ there exists $u\in W^M$ such that $u\Re w$,  then we write $M\approx_{(\Var,\Agt)} N$. 
The following result shows that $(\Var,\Agt)$-bisimulation preserves properties in the language $\lang_{(\Var,\Agt)}$. 

\begin{lemma}
If $M$ and $N$ are Kripke structures and $u$ and $w$ are worlds of 
$M$ and $N$ such that
 $(M,u) \approx_{(\Var,\Agt)} (N, w)$, then 
 for all $\varphi \in \lang_{(\Var,\Agt)}$ we have 
$M, u\models \varphi$ if and only if $N, w \models \varphi$. 
If $M\approx_{(\Var,\Agt)} N$ then for all $\varphi \in \lang_{(\Var,\Agt)}$ we have 
$M \models \varphi$ if and only if $N \models \varphi$.
\end{lemma}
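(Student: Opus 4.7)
The plan is to establish the first assertion by structural induction on $\varphi$, and then derive the second assertion from it together with the coverage property in the definition of $M \approx_{(\Var,\Agt)} N$.

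Fix a $(\Var,\Agt)$-bisimulation $\Re$ witnessing $(M,u) \approx_{(\Var,\Agt)} (N,w)$, so that $u \Re w$. I would proceed by induction on $\varphi \in \lang_{(\Var,\Agt)}$. The base cases are immediate: $\varphi = \top$ is trivial, and for $\varphi = v$ with $v \in \Var$ the equivalence $M,u \models v$ iff $N,w \models v$ is exactly the Atoms clause. The propositional cases $\varphi = \neg \psi$ and $\varphi = \psi_1 \wedge \psi_2$ are routine applications of the induction hypothesis.

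The interesting case, and the only one where the bisimulation conditions really bite, is $\varphi = K_i \psi$ with $i \in \Agt$. For the forward direction, assume $M,u \models K_i \psi$ and pick any $w' \in W^N$ with $w \sim^N_i w'$; applying the Back clause to the pair $u \Re w$ and the $\sim^N_i$-step from $w$ to $w'$ yields some $u' \in W^M$ with $u \sim^M_i u'$ and $u' \Re w'$. The assumption gives $M,u' \models \psi$, and by induction $N,w' \models \psi$, so $N,w \models K_i \psi$. The converse direction is symmetric, using the Forth clause in place of Back. Note that both Forth and Back are needed, one per direction of the biconditional, which is why the bisimulation definition must be two-sided; this interplay is the one substantive step, though it is standard.

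For the global statement, assume $M \approx_{(\Var,\Agt)} N$ via a bisimulation $\Re$ that is ``surjective on both sides.'' To show $M \models \varphi$ implies $N \models \varphi$, take any $w \in W^N$; the coverage property supplies some $u \in W^M$ with $u \Re w$, and then $M,u \models \varphi$ (by assumption) together with the first part of the lemma gives $N,w \models \varphi$. The reverse implication follows symmetrically using the other coverage condition. I do not anticipate any real obstacle: the proof is entirely standard modal-logic bookkeeping, and the only place where care is needed is lining up the Forth/Back clauses with the correct direction of the $K_i$ biconditional, given the slightly informal notation in which the ``$\Re$'' in the Back clause implicitly pairs an $N$-world on the left with an $M$-world on the right.
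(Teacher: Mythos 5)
Your proof is correct. The paper in fact omits the proof of this lemma entirely, remarking only that it is ``a minor variant of well-known results in the literature''; your structural induction --- Atoms for the base case, Back for the left-to-right direction of the $K_i$ case, Forth for the converse, and the two coverage conditions for the global statement --- is precisely that standard argument, and you correctly identify and repair the typographical slip in the paper's Back clause (where $\Re \subseteq W^M \times W^N$ forces the $M$-world to appear on the left). The only presentational nicety worth adding is that the inductive statement should be phrased as quantified over \emph{all} pairs $u \Re w$, since in the $K_i$ case you invoke the induction hypothesis at the new pair $(u', w')$ rather than at the original $(u, w)$.
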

 
We omit the proof since it is a minor variant of well-known results 
in the literature.  
In our applications of this result, 
we will consider a complex, concrete structure $M$ 
and a simper, more abstract structure $N$, and show that 
$M\approx_{(\Var, \Agt)} N$. This enables us to verify 
$M \models \varphi$ using the model checking problem
$N \models \varphi$, which is likely to be computationally 
easier in view of the smaller size of $N$. However, we need 
to also develop an abstraction technique for the programs that 
generate these Kripke structures. We develop this technique in the following sections. 

\section{A Programming Language and its Semantics} \label{sec:programs} 

We use a small multi-agent programming language equipped with a notion of observability. 
All variables are Boolean,  and expressions are formed from variables  using the usual Boolean operators. 
The language has the following 
atomic actions, 
in which $i$ and $j$ are  agents,  $x$ is a variable name and $e$ is an expression: 
\be 
\item $i: x:=e$  ---  agent $i$ evaluates $e$ and assigns the result to $x$, 
\item $i: \rand(x)$ --- agent $i$ assigns a random value to $x$,  
\item $i: e \rightarrow j.x$ --- agent $i$ evaluates $e$ and transmits the result across a private channel  to agent $j$, 
 who assigns it to its variable $x$, 
 \item $i: \broadcast(x)$ --  agent $i$ broadcasts the value of the variable $x$ to all other agents. 
\ee
Note that we write $i.x$ for agent $i$'s variable $x$ (the variables $i.x$ and $j.x$ are considered distinct when $i\neq j$)
but may omit the agent name when this is clear from the context.   In particular,  in
an atomic action $i:a$, any variable $x$ not explicitly associated with an agent
refers to $i.x$. For example, we may write $i: x := y \xor z$ rather than $i: i.x := i.y \xor i.z$. 
Similarly, when $e$ is an expression in which agent indices are omitted, and $i$ is an agent, 
the expression $i.e$ refers to the result of replacing each occurrence of a variable name $x$ in $e$ 
that is not already associated to an agent index 
with $i.x$. Thus $i.(y \xor j.z)$ represents  $i.y \xor j.z$.

 Each atomic action {\em reads}  and {\em writes}  certain variables. 
 Specifically, the action $i: x:=e$  reads the (agent $i$) variables in $e$ and writes $i.x$, the action 
$i: \rand(x)$ reads nothing and writes $i.x$, the action 
$i: e \rightarrow j.x$ reads the (agent $i$) variables in $e$ and writes $j.x$, and the action 
$i: \broadcast(x)$ reads $x$  and writes nothing. 
A {\em  joint action} is  a set of atomic actions in which no variable is 
written more than once.   Intuitively,  a joint action is executed by first evaluating all the expressions and then performing a  simultaneous
assignment  to the variables. 

A {\em program}  is given by a sequence of joint actions
$A_1; \ldots; A_n$. 
A {\em program for agent $i$} is a program in which  each atomic action $j:a$ in any step 
has $j=i$. We permit parallelism within an agent, in the sense that we do {\em not} require that 
a joint action contains at most one atomic action for each agent. 
If we are given for each agent $i$ a program $P_i= A^i_1; \ldots; A^i_n$, all of the same
length $n$, then we may form the joint program $||_i P_i = (\cup_i A^i_1); \ldots ; (\cup_i A^i_n)$. 

Some well-formedness conditions are required on agent programs. 
An {\em observability mapping} is a function $\omap$  mapping each agent 
to a set of variables, intuitively, the set of variables that it may observe. 
A program runs in the context of an observability mapping, and modifies that 
mapping. We say that {\em a joint action $A$ is  enabled} at an observability 
map $\omap$ if 
\be 
\item no variable written to by $A$ is in  $\omap(i)$ for any agent $i$
(that is, all variables written to are 
{\em new} variables), and 
\item for each atomic action $i: x:=e$ and $i: e \rightarrow j.x$ in $A$, 
the expression $i.e$ contains only variables in $\omap(i)$, and 
\item for each action $i.\broadcast(x)$ we have $i.x\in \omap(i)$. 
\ee 
These constraints may be understood as access control constraints stating that 
agent $i$ may read only the variables in $\omap(i)$ and may write only new variables. 

Executing the action $A$ transforms the  observability map $\omap$
to the  observability map $\omap [A]$  
such that $\omap[A](i)$ is  the result of adding to $\omap(i)$
\be 
\item 
all variables $i.x$  such that an action of the form $i: x:= e$ or $i:\rand(x)$ or  $j: e\rightarrow i.x$ occurs in $A$, and 
\item 
all variables $j.x$   such that $j:\broadcast(x)$ 
occurs in $A$. 
\ee 
These definitions are  generalised to programs: 
{\em the program $P = A_1; \ldots; A_n$  is enabled}  at
the observability map $\omap$  if for each $i = 1 \ldots n$, 
the action $A_i$ is  enabled at $\omap [A_1]\ldots [A_{i-1}]$, 
and we define $\omap [P]$ to be $\omap [A_1]\ldots [A_n]$.

\begin{example} 
Consider a two-agent system with agents $i,j$. 
The action $\{i: x:= j.y\}$ is not enabled at the observability map $\omap$ 
given by $\{j \mapsto \{j.y\}\}$. However, the 
program $\{j:\broadcast(y)\}; \{i: x:= j.y\}$ is enabled at $\omap$, 
since  the action $\{j:\broadcast(y)\}$ is enabled at $\omap$, and 
transforms $\omap$ to  $\omap[\{j:\broadcast(y)\}] = \{j \mapsto \{j.y\}, ~i\mapsto  \{j.y\}\}$,
at  which the action $\{i: x:= j.y\}$ is enabled. 
\end{example} 

We say that an  observability map is {\em  consistent} with a Kripke  structure
$M = (\Agt, W, \{\sim_i\}_{i\in \Agt}, \Var, \pi)$
when for all agents $i$, if $v$ is a variable in $\omap(i)$ then $v\in Var$, and for 
 all worlds $w,w'\in W$ such that $w\sim_i w'$
we have $\pi(w,v) = \pi(w',v)$. Intuitively, $\omap$ is consistent with $M$  if all 
variables declared to be local to agent $i$ by $\omap$ are in fact defined and  semantically 
local  to agent $i$ in $M$.

The {\em program $P$ is enabled at a Kripke structure $M$} if 
there exists an observability map $\omap$ such that
\be 
\item $\omap$  is consistent with $M$, 
\item $P$ is enabled at $\omap$, and 
\item all variables  $x$ written by $P$ are not defined in $M$ (i.e., $x\not \in Var$). 
\ee 
In particular, note that if a single joint action $A$ is enabled at $M$, then 
for all variables $x$ read by $A$, and all worlds $w$, the value $\pi(w,x)$ is
defined. Consequently, we may also evaluate at $w$ any expression $e$ 
required to be computed  by $A$. We write $\pi(w,e)$ for the result. 

We can now give a semantics of programs,  in which a program applied to a Kripke structure
representing the  initial states of information of the agents, 
transforms the structure into another Kripke structure  representing 
the states of information of the agents after running the program. The definition is 
given inductively, on an action-by-action basis. 
Let $M = (\Agt, W, \{\sim_i\}_{i\in \Agt}, \Var, \pi)$ be  a Kripke structure and 
$A$ a joint action. We define a Kripke structure 
$M [A] = (\Agt', W', \{\sim'_i\}_{i\in \Agt'}, \Var', \pi')$ as follows. 
Let $V$ be the set of variables $i.x$ such that $A$ includes
the atomic action $i:\rand(x)$. Intuitively, such actions increase the 
amount of non-determinism in the system, whereas all other actions  
have deterministic effects. We define
$\Agt' = \Agt$ and take  $W'$ to be the set of states of the form $(w, \kappa)$ where 
$w\in W$ and  $\kappa : V \rightarrow \{0,1\}$ is 
an assignment of boolean values to the variables  in $V$. 
We may write $w+\kappa$ for the pair $(w,\kappa)$. 
In case $V$ is the empty set, $\kappa$ is always the null
 function, so we 
may write just $w$ for $(w,\kappa)$. The set $\Var'$ of variables defined in 
$M[A]$ is obtained by adding to $\Var$ all variables written to by $A$. 
The assignment $\pi'$ is obtained by extending $\pi$ to these new variables
by defining $\pi'$ as follows on worlds $w+\kappa$: 
\be
\item if $v \in Var$ then $\pi'(w+\kappa,v ) = \pi(w,v)$ , 
\item if $i: x:=e$ occurs in $A$ then $\pi'(w+\kappa,i.x) = \pi(w,i.e)$ , 
\item if $i: \rand(x)$ occurs in $A$ then $\pi'(w+\kappa,i.x) = \kappa(i.x)$, and 
\item if $j: e \rightarrow i.x$ occurs in $A$ then $\pi'(w+\kappa,i.x) = \pi(w,j.e)$.  
\ee 
Finally, the indistinguishability relations $\sim'_i$ are defined using the observability map 
$\omap [A]$: we define 
$w+\kappa \sim'_i w'+\kappa'$ when $w\sim_i w$ 
and for all variables $x$ in $\omap[A](i)\setminus \omap(i)$, we have $\pi'(w+\kappa, x) = \pi'(w'+\kappa', x)$. 
Intuitively, this reflects that the agent recalls any information it had in the structure $M$, 
and adds to this information that it is able to observe in the new state. 
Note that in fact $w+\kappa \sim'_i w'+\kappa'$ implies $\pi'(w+\kappa, x) = \pi'(w'+\kappa', x)$
for {\em all} variables $x\in \omap[A](i)$, 
since we have assumed that for $x\in \omap(i)$ we have that  $w\sim_i w$ implies $\pi(w,x) = \pi(w',x)$. 
Moreover, since the set $\omap[A](i) \setminus \omap(i)$ is just the set of 
variables written to by $A$ that are made observable to $i$, this observation 
also yields that the definition of $M[A]$ is independent of the 
choice of observation map $\omap$ consistent with $M$. 

\section{Chaum's Dining Cryptographers Protocol}\label{sec:DC} 

Chaum's Dining Cryptographers protocol is an example of an anonymous broadcast
protocol: it allows an agent to send a message without revealing
its  identity. Chaum introduces the protocol 
with the following story:

\begin{quote}
Three cryptographers are sitting down to dinner at their favourite restaurant.
Their waiter informs them that arrangements have been made with the maitre d'hotel
for the bill to be paid anonymously. One of the cryptographers might be paying for
the dinner, or it might have been NSA (U.S National Security Agency). The three
cryptographers respect each other's right to make an anonymous payment, but they
wonder if NSA is paying. They resolve their uncertainty fairly by carrying out the following protocol:\\
Each cryptographer flips an unbiased coin behind his menu, between him and the cryptographer on his right, so that only the two of them can see the outcome. Each cryptographer then states aloud whether the two coins he can see--the one he flipped and the one his left-hand neighbor flipped--fell on the same side or on different sides. If one of the cryptographers is the payer, he states the opposite of what he sees. An odd number of differences uttered at the table indicates that a cryptographer is paying; an even number indicates that NSA is paying (assuming that the dinner was paid for only once). Yet if a cryptographer is paying, neither of the other two learns anything from the utterances about which cryptographer it is.
\end{quote}

\newcommand{\inedges}{{\it in}} 
\newcommand{\outedges}{{\it out}} 

Chaum shows that this protocol solves the problem, and notes that it can be considered
as a mechanism enabling a signal to be anonymously transmitted, under the  
assumption that at most one of the agents wishes to transmit. 
He goes on to generalize the idea to $n$-agent settings where, in 
place of the ring of coins, we have a graph 
representing the key-sharing arrangement. 

The more general protocol can be represented in our
programming language as follows. We assume that there is a set 
$\Agt$ of agents, who share secrets based on a (directed) key sharing graph 
$G = (\Agt, E)$ in which the vertices are the agents in $\Agt$ and the edges $E\subseteq \Agt \times \Agt$ 
describe the keysharing arrrangement amongst the agents. 
We model keysharing by assuming that for each edge $e = (i,j)$, 
agent $i$ generates the key corresponding to the edge, and 
communicates the key to $j$ across a secure channel. 
For each edge $e= (i,j)$ we write $e_1$ for the source agent $i$ and $e_2$ for the destination agent $j$. 
For each agent $i$ we define $\inedges(i) = \{e\in E ~|~e_2 = i\}$ and 
$\outedges(i) = \{e\in E~|~e_1 = i\}$.  Accordingly, we use two variables for each edge $e= (i,j)$: the variable 
$i.k_e$ stores $i$'s copy of the key corresponding to the edge,  and the variable 
$j.k_e$ stores  $j$'s copy. We write $\keys(i)$ for $\inedges(i) \cup \outedges(i)$, i.e., the set of edges incident on $i$. 
The protocol $DC_i(m)$ of an agent $i \in \Agt$ (in which the message represented by the expression $i.m$ 
is transmitted anonymously by agent $i$) consists of  the following five steps: 

\begin{figure} [h]
$$
DC_i(m)  = \begin{array}[t]{l} 
\{ i:\rand(k_e) ~|~ e\in \outedges(i) \}; \\ 
\{ i: k_e \rightarrow e_2.k_e ~|~ e\in \outedges(i) \} \\ 
\{ i: b : =  m \xor  \xor_{e\in \keys(i)}~ k_e \} ; \\ 
\{ i: \broadcast (b) \} ;   \\
\{ i: rr :=   \xor_{j\in \Agt} ~j.b\} 
\end{array} 
$$ 
\textbf{ \caption{The protocol $DC$}}
\end{figure}

We write $DC(m)$ for the joint program $||_{i\in \Agt} DC_i(m)$.

Intuitively, the protocol DC operates by first generating keys and setting up the key sharing graph, 
and then having each of the agents make a public announcement encrypted using all the keys 
available to them. The directionality of an edge in the key sharing graph indicates who
generates the key corresponding to the edge, viz, the source agent of the edge. 
The first step of the protocol corresponds to each agent generating the key values for which they are responsible.  
In the second step, these keys are shared with the other agent on the edge by transmission across a secure channel. 
Each agent now has the value of each of the key edges on which it is incident, and computes the 
xor of its message with all these key values in the 3rd step, and broadcasts the result in the 4th step. 
In the final step of the protocol, each agent computes the xor of the messages broadcast as the 
result of the protocol. 

\section{An Abstraction of the Dining Cryptographers Protocol} \label{sec:DCa} 

We are interested in protocols in which the DC protocol is used as a basic building block, 
and in model checking the agent's knowledge in the resulting protocols. 
In order to optimize this model checking problem, we now introduce a protocol 
that we will show to be an abstraction of the DC protocol that preserves epistemic properties. 

The abstracted version of the protocol omits the use of keys, but adds to the set of agents  a trusted third party
$T$ who computes the result of the protocol on behalf of the agents, and then broadcasts it. 
Here, we take $\Agt^a = \Agt \cup \{ T\} $. The protocol $DC^a_i(m)$ for agent $i$ is given in four steps, see Figure~\ref{fig:dca}. 
\begin{figure}[h]
$$
\begin{array}{rcllrcl} 
DC^a_i(m) &  =  & 
\begin{array}[t]{l} 
\{ i: m \rightarrow T.x_{i}  \} ; \\
\{ \} ; \\ 
\{  \} ; \\ 
\{ i: rr := y \}
\end{array} &  ({\rm for}~ i\in \Agt) ~~~~~~~~~~~ & 
DC^a_T(m) &  =  & 
\begin{array}[t]{l} 
\{ \} ; \\ 
\{ T: y:= \xor_{i\in \Agt} ~x_i \} ; \\ 
\{  T: \broadcast (y) \} ; \\ 
\{ \} 
\end{array} 
\end{array} 
$$ 
\textbf{ \caption{The abstract protocol $DC^{a}$\label{fig:dca}}}
\end{figure}
We write $DC^a(m)$ for the joint program $||_{i\in \Agt^a} DC^a_i(m) $.
Intuitively, in the abstract protocol, the agents transmit their bits across a 
secure channel to the trusted third party, who computes the exclusive-or
and broadcasts it. 

Note that since the protocol $DC^a$ makes no use of randomization, the set of worlds of the structure 
$M[DC^a(m)]$ is identical to the set of worlds of the structure $M$; only the set of defined variables
and the indistinguishability relation change. We can characterize the indistinguishability relations of 
$M[DC^a(m)]$ as follows, where we introduce the abbreviation $\xm$ for $\xor_{i\in \Agt} ~i.m$.

\begin{lemma} \label{lem:simdca} 
If $M$ is a Kripke structure at which $DC^a(m)$ is enabled, and $u,v$ are worlds of $M[DC^a(m)]$ then 
$u \sim^{M[DC^a(m)]}_i v$ iff $u\sim^M_i v$ and $\pi^M(u, \xm) = \pi^M(v,\xm)$.
\end{lemma}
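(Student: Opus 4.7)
The plan is to unroll the inductive action-by-action semantics of Section~\ref{sec:programs} through the four joint actions of $DC^a(m)$, tracking both the evolving observability map and the indistinguishability relation for each agent $i \in \Agt$.

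First I would observe that $DC^a(m)$ contains no $\rand$ actions, so at each step the auxiliary set $V$ is empty and the states of $M[DC^a(m)]$ can be identified with those of $M$; only the set of defined variables, the valuation, and the indistinguishability relations change. Next, starting from any observability map $\omap$ consistent with $M$ at which $DC^a(m)$ is enabled, I would compute $\omap[A_1]$, $\omap[A_1][A_2]$, $\omap[A_1][A_2][A_3]$, and $\omap[DC^a(m)]$ step by step, restricted to an agent $i \in \Agt$. The secure-channel action $i: m \rightarrow T.x_i$ in step~1 makes $T.x_i$ observable to $T$ only; step~2 adds $T.y$ to $T$'s observables only; step~3 is the broadcast $T: \broadcast(y)$, which adds $T.y$ to $\omap(i)$; step~4 adds $i.rr$ to $\omap(i)$. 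Hence the only variables new to $\omap(i)$ after the full program are $T.y$ and $i.rr$.

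In parallel I would compute the valuations of these new variables at a world $w$ using the clauses in the semantics for assignment, secure-channel, and broadcast actions: $\pi'(w, T.x_j) = \pi^M(w, j.m)$ for each $j \in \Agt$, hence $\pi'(w, T.y) = \xor_{j\in \Agt} \pi'(w, T.x_j) = \pi^M(w, \xm)$, and finally $\pi'(w, i.rr) = \pi'(w, T.y) = \pi^M(w, \xm)$. Applying the inductive definition of $\sim'_i$ at each of the four steps, and using that no new variables are added to $\omap(i)$ in steps~1 and~2, gives $u \sim^{M_2}_i v$ iff $u \sim^M_i v$; then step~3 adds the constraint $\pi'(u, T.y) = \pi'(v, T.y)$, which by the computation above is exactly $\pi^M(u,\xm) = \pi^M(v,\xm)$; and step~4 adds the constraint on $i.rr$, which is the same condition, hence contributes nothing new. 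Combining yields the stated characterization.

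The main obstacle I expect is purely bookkeeping rather than conceptual: one has to be careful that the intermediate variable $T.x_i$, although defined after step~1, never enters $\omap(i)$ for any $i \in \Agt$ (it is shared only with $T$ across a private channel), so it never appears as a constraint in $\sim^{M[DC^a(m)]}_i$. A minor notational point is that in step~4 the expression $y$ in $i: rr := y$ must be read as referring to the just-broadcast $T.y$ rather than a nonexistent $i.y$; once this is clarified, the computation of $\pi'(w, i.rr)$ goes through as above and the lemma follows.
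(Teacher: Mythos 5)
Your proof is correct: the paper states this lemma without proof (offering only the preceding remark that $DC^a$ uses no randomization, so the worlds of $M[DC^a(m)]$ coincide with those of $M$), and your action-by-action unrolling of the observability map and valuation is exactly the intended argument, including the observations that $T.x_i$ never enters $\omap(i)$ for $i\in\Agt$ and that the $i.rr$ constraint in the final step is subsumed by the $T.y$ constraint from the broadcast.
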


The program $DC(m)$ makes use of randomization, so the structure $M[DC(m)]$ has more worlds than 
the structure $M$. More specifically, it can be seen that the worlds of  $M[DC(m)]$ have the form 
$((w, \kappa_1), \kappa_2)$, where $\kappa_1$ assigns boolean values to the variables 
$i.k_e$ for $e \in E$ and $i = e_1$, and  $\kappa_2$ assigns boolean values to the variables 
$i.k_e$ for $e \in E$ and $i = e_2$. Note that by the second step of the protocol, we always have 
$\kappa_1(e_1.k_e) = \kappa_2(e_2.k_e)$ for all $e\in E$. We may therefore abbreviate such a world to $w+\kappa$, 
where $\kappa: E\rightarrow \{0,1\}$, and 
we have 
\be 
\item $\pi^{M[DC(m)]}(w + \kappa, e_1.k_e) = \kappa(e)$, 
\item $\pi^{M[DC(m)]}(w + \kappa, e_2.k_e) = \kappa(e)$, 
\item $\pi^{M[DC(m)]}(w + \kappa, i.b) = \pi(w,i.m) \xor \xor_{e\in \keys(i)} \, \kappa(e)$, 
and 
\item $\pi^{M[DC(m)]}(w + \kappa, i.rr) = \xor_{j\in \Agt} ~\pi^{M[DC(m)]}(w+ \kappa,j.b)$. 
\ee 
Note that the final equation may be simplified as follows:  
$$
\begin{array}{rcl} 
\pi^{M[DC(m)]}(w+ \kappa,i.rr)  & =  & \xor_{j\in \Agt} ~\pi^{M[DC(m)]}(w+ \kappa,j.b) \\ 
& = &  \xor_{j\in \Agt} ~(\pi^{M[DC(m)]}(w+ \kappa,j.m) \xor \xor_{e\in keys(j)}~ \kappa(e)) \\ 
& = &  (\xor_{j\in \Agt} ~\pi^M(w,j.m))\\ 
& = &  \pi^M (w,\xm)~
\end{array} 
$$ 
where the third step follows using the fact each term $\kappa(e)$ occurs twice, 
once for $e\in keys(e_1)$ and once for $e\in \keys(e_2)$. 
Based on this representation, we can characterize the indistinguishability relations of 
$M[DC(m)]$ as follows: 

\begin{lemma} \label{lem:simdc} 
If $M$ is a Kripke structure at which $DC(m)$ is enabled, and $u+\kappa$ and $v+\lambda$ are worlds of $M[DC(m)]$ then 
$u+\kappa \sim^{M[DC(m)]}_i v+\lambda$ iff 
\be \item $u\sim^M_i v$ and 
\item $\kappa(e) = \lambda(e)$ for all $e\in \keys(i)$ and 
\item $\pi^{M}(u,j.m) \xor \xor_{e\in keys(j)} \kappa(e) = \pi^{M}(v,j.m) \xor \xor_{e\in keys(j)} \lambda(e)$ for all 
$j\in \Agt$. 
\ee 
\end{lemma}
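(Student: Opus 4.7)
The plan is to unfold the definition of $M[DC(m)]$ step by step through the five joint actions of the protocol, tracking how the observability map $\omap$ grows and how each application of the $M[A]$ construction refines the indistinguishability relation. Since we have already characterized the valuation $\pi^{M[DC(m)]}$ on the four relevant kinds of variables ($e_1.k_e$, $e_2.k_e$, $i.b$, $i.rr$), the only thing left to prove is precisely which of these variables are observable to agent $i$ at the end of the run, and then to rewrite ``equal valuations on those variables'' in the form given by the three clauses of the lemma.

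First I would check that $DC(m)$ is indeed enabled at $M$ under any observability map $\omap$ consistent with $M$ whose domain of defined variables matches $\Var^M$, and compute $\omap[DC(m)](i)$ by running the five atomic-action-map updates from Section~\ref{sec:programs}. After step 1 (the $\rand$ actions), $\omap(i)$ gains the outgoing keys $i.k_e$ for $e\in \outedges(i)$; after step 2 (the point-to-point transmissions), it gains the incoming keys $i.k_e$ for $e\in \inedges(i)$, so together agent $i$ now sees exactly the keys indexed by $\keys(i)$; step 3 adds the local variable $i.b$; step 4 adds every $j.b$ via broadcast; and step 5 adds $i.rr$. So the newly observable variables to $i$ are $\{i.k_e\mid e\in \keys(i)\}\cup\{j.b\mid j\in \Agt\}\cup\{i.rr\}$.

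Next I would apply the definition of $\sim_i'$ from the program semantics: the five-fold iterated version says that $u+\kappa \sim^{M[DC(m)]}_i v+\lambda$ iff $u\sim^M_i v$ and $u+\kappa$, $v+\lambda$ agree under $\pi^{M[DC(m)]}$ on every variable in the observability set computed above. I would then substitute the explicit formulas for $\pi^{M[DC(m)]}$ given just before the lemma. Agreement on $i.k_e$ for $e\in\keys(i)$ is $\kappa(e)=\lambda(e)$, which is clause (2). Agreement on each $j.b$ is exactly clause (3). Finally, agreement on $i.rr$ is automatic from the simplification shown in the excerpt that $\pi^{M[DC(m)]}(w+\kappa,i.rr) = \pi^M(w,\xm)$, since $u \sim^M_i v$ together with clause (3) summed over $j$ (using that each $\kappa(e)$ and each $\lambda(e)$ occurs twice in the total sum and thus cancels) forces $\pi^M(u,\xm)=\pi^M(v,\xm)$.

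The only mildly tricky step is the bookkeeping in the middle: I must be careful that the condition $w\sim^M_i v$ is preserved and not double-counted as we thread through the five intermediate structures $M$, $M[A_1]$, \ldots, $M[A_5]$, because each application of the semantics adds a fresh equality requirement on ``variables newly observed at this step'' while inheriting the old indistinguishability. The paper already notes that the resulting relation is independent of the choice of starting observability map consistent with $M$, which lets me write the condition as a single conjunction rather than a chain of five conditions. Modulo that, the proof is essentially computational substitution from the previously displayed formulas for $\pi^{M[DC(m)]}$.
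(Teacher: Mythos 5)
The paper states this lemma without proof (it is presented as an immediate consequence of the valuation formulas derived just above it and of the observability-based definition of $\sim'_i$), and your argument correctly supplies exactly that intended justification: compute $\omap[DC(m)](i)\setminus\omap(i)$ step by step, note that the iterated semantics collapses to ``$u\sim^M_i v$ plus agreement on the newly observable variables,'' and substitute the displayed formulas for $\pi^{M[DC(m)]}$, with agreement on $i.rr$ (and on $i.b$) subsumed by clauses (2) and (3). This is correct and is essentially the same (implicit) approach as the paper's.
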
 

\section{Proof of Abstraction} \label{sec:abstraction}

The following is implicit\footnote{Chaum's result is stated probabilistically, but the proof is 
largely non-probabilistic and establishes this result.} 
in the proof of a key result concerning the DC protocol that is proved in Chaum \cite{chaum} (Section 1.4).

\begin{lemma} \label{lem:chaum} 
For all $i\in Agt$ and for all  functions  $\kappa:E \rightarrow \{0,1\}$ and $\mu:\Agt   \rightarrow \{0,1\}$ and $\mu':\Agt   \rightarrow \{0,1\}$ 
such that $\xor_{i\in \Agt} \, \mu(i) = \xor_{i\in \Agt}\,  \mu'(i)$, there exists a function $\lambda :E \rightarrow \{0,1\}$ 
such that $\kappa \upharpoonleft keys(i) = \lambda\upharpoonleft keys(i)$ and
for all $j \in \Agt$, we have  $\mu(j)  \xor \xor_{e\in keys(j)}\,  \kappa(e) = \mu'(j) \xor \xor_{e\in keys(j)} \, \lambda(e)$
\end{lemma}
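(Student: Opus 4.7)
The plan is to reformulate the goal in terms of the difference $\delta := \kappa \xor \lambda$: finding the desired $\lambda$ is equivalent to finding $\delta : E \to \{0,1\}$ such that (i) $\delta(e) = 0$ for $e \in \keys(i)$, and (ii) $\xor_{e \in \keys(j)} \delta(e) = d(j)$ for every $j \in \Agt$, where $d(j) := \mu(j) \xor \mu'(j)$. The global hypothesis $\xor_i \mu(i) = \xor_i \mu'(i)$ then reads $\xor_j d(j) = 0$, and consistency of (i) and (ii) at $j = i$ forces $d(i) = 0$, i.e.\ $\mu(i) = \mu'(i)$; this is a condition implicitly required by the lemma as stated.

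Next I would solve for $\delta$ using the structure of the key-sharing graph $G$. Viewing $G$ as undirected, (ii) says that the graph-theoretic $\mathbb{F}_2$-boundary of $\delta$ equals $d$, while (i) confines $\delta$ to the edge set of the induced subgraph $G' := G \setminus \{i\}$. Assuming $G'$ is connected (the natural graph hypothesis, implicit in Chaum's setting), a direct construction suffices: since $\xor_{j \ne i} d(j) = 0$, the set $S := \{\, j \ne i \mid d(j) = 1 \,\}$ has even size; pair its elements up arbitrarily, pick a simple path in $G'$ joining each pair, and let $\delta$ be the $\mathbb{F}_2$-sum of the indicator functions of these paths. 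Each path's indicator has boundary supported exactly on its two endpoints, so the accumulated boundary reproduces $d$ on $\Agt \setminus \{i\}$; and since no edge of $\keys(i)$ is used, (i) also holds. Setting $\lambda := \kappa \xor \delta$ then witnesses the lemma.

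The main obstacle is not the combinatorial construction, which is routine, but rather pinning down the right hypotheses. As stated the lemma omits both the connectivity requirement on $G \setminus \{i\}$ and the necessary side condition $\mu(i) = \mu'(i)$; a rigorous proof must either add these assumptions explicitly or recover them from the setup of Section~\ref{sec:DC} and the use in Lemma~\ref{lem:simdc} (where, for instance, $\mu(i) = \mu'(i)$ ought to follow from the fact that agent $i$'s own message bit $i.m$ is observable to $i$ and hence constant on $\sim_i$-classes). Getting these hypothesis conditions precisely right, and checking that Chaum's original graph-theoretic assumption survives into the abstraction framework used here, is where the care is needed.
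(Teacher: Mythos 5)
The paper does not actually prove this lemma: it is stated with only a pointer to Section~1.4 of Chaum's paper \cite{chaum}, with a footnote that the (largely non-probabilistic) argument there establishes it implicitly. So there is no in-paper proof to compare against, and your argument supplies one. It is the standard argument underlying Chaum's proof: recast the problem as finding the difference $\delta = \kappa \xor \lambda$ with prescribed $\mathbb{F}_2$-boundary $d = \mu \xor \mu'$ and with $\delta$ vanishing on $\keys(i)$, then realize $d$ as the boundary of a sum of path indicators in the graph with vertex $i$ deleted. This is correct under the hypotheses you identify, and your two caveats are genuine defects of the statement as printed rather than artifacts of your approach: taking $j=i$ in the conclusion together with $\kappa \upharpoonleft \keys(i) = \lambda \upharpoonleft \keys(i)$ forces $\mu(i)=\mu'(i)$, which does not follow from equality of total parities alone; and without connectivity of the key-sharing graph off $i$ the lemma fails outright (a star centred on $i$ makes $\keys(i)=E$, forcing $\lambda=\kappa$ and hence $\mu=\mu'$ everywhere). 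Neither hypothesis appears anywhere in the paper, and Section~\ref{sec:DC} places no connectivity assumption on $G$. As you observe, both are recoverable at the one point where the lemma is invoked (the Back clause of Theorem~\ref{thm:abstractdc}): the variables of $i.m$ must lie in $\omap(i)$ for $DC_i(m)$ to be enabled, so $u \sim_i v$ gives $\pi(u,i.m)=\pi(v,i.m)$ and hence $\mu(i)=\mu'(i)$; connectivity of $G\setminus\{i\}$ is the standing (if unstated) assumption inherited from Chaum. Your proof is sound and, if anything, more careful than the source it replaces; the only thing to finish is to state these two hypotheses explicitly in the lemma and discharge them at the point of use.
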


Note that the variables introduced by $DC(m)$ are the variables $i.k_e$, $i.b$ and $i.rr$ for $i\in \Agt$ and $e\in E$.  
The variables introduced by $DC^a(m)$ are $T.x_i$, $T.y$ and $i.rr$ for $i\in \Agt$. 
Hence the set of variables introduced by both protocols is the set  $\{i.rr ~|~ i\in \Agt\}$. 
The following result states that these variables are introduced by these protocols in such a way as to extend a bisimulation
between given concrete and abstract structures to the new variables. 

\begin{theorem} \label{thm:abstractdc}
Suppose that  $M \approx_{V,\Agt} M^a$  for a set of variables $V$ containing all variables in the expressions $i.m$ for $i\in \Agt$, 
and let $DC(m)$ be enabled at $M$ and $DC^a(m)$ be enabled at $M^a$. 
Then $M[DC(m)] \approx_{V\cup \{i.rr \,|\, i\in \Agt\}, \Agt} M^a[DC^a(m)]$. 
\end{theorem}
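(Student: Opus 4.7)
The plan is to lift the given $(\Var,\Agt)$-bisimulation $\Re$ between $M$ and $M^a$ to a bisimulation $\Re'$ between $M[DC(m)]$ and $M^a[DC^a(m)]$ by ignoring the random key assignments: define $(w+\kappa) \, \Re' \, u$ iff $w \, \Re \, u$. Coverage is immediate, since for every $w+\kappa$ in $M[DC(m)]$ the coverage property of $\Re$ gives some $u$ with $w \, \Re \, u$, and conversely any $u$ in $M^a[DC^a(m)]$ is matched by $w+\kappa$ for any witness $w$ and any $\kappa$. The nontrivial task is to check the three clauses of the bisimulation definition for the enlarged variable set $V \cup \{i.rr \mid i \in \Agt\}$.

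For \textbf{Atoms}, values of variables in $V$ are unchanged by passing to $M[\cdot]$ and $M^a[\cdot]$, so they are handled by Atoms for $\Re$. For the new variables $i.rr$, using the computation preceding Lemma~\ref{lem:simdc}, $\pi^{M[DC(m)]}(w+\kappa, i.rr) = \pi^M(w, \xm)$; and in the abstract model, one unfolds $\pi^{M^a[DC^a(m)]}(u, i.rr) = \pi^{M^a}(u, y) = \bigoplus_{j\in \Agt} \pi^{M^a}(u, j.m) = \pi^{M^a}(u, \xm)$. Since all variables of $\xm$ lie in $V$, Atoms for $\Re$ closes the case.

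For \textbf{Forth}, suppose $(w_1+\kappa_1) \sim^{M[DC(m)]}_i (w_2+\kappa_2)$ and $w_1 \, \Re \, u_1$. Lemma~\ref{lem:simdc} gives $w_1 \sim^M_i w_2$, and XOR-summing clause~(3) of that lemma over all $j \in \Agt$ makes every $\kappa_\ell(e)$ appear twice (once at $e_1$, once at $e_2$), so it collapses to $\pi^M(w_1, \xm) = \pi^M(w_2, \xm)$. Apply Forth for $\Re$ to obtain $u_2$ with $u_1 \sim^{M^a}_i u_2$ and $w_2 \, \Re \, u_2$. By Atoms for $\Re$ together with the equality just derived, $\pi^{M^a}(u_1, \xm) = \pi^{M^a}(u_2, \xm)$, so Lemma~\ref{lem:simdca} yields $u_1 \sim^{M^a[DC^a(m)]}_i u_2$; then $(w_2+\kappa_2) \, \Re' \, u_2$ by construction.

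The main obstacle is the \textbf{Back} direction, which is where Lemma~\ref{lem:chaum} is essential: given $u_1 \sim^{M^a[DC^a(m)]}_i u_2$ and $(w_1+\kappa_1) \, \Re' \, u_1$, we must manufacture a key assignment $\kappa_2$ and a world $w_2$ matching $u_2$. First use Back for $\Re$ to obtain $w_2$ with $w_1 \sim^M_i w_2$ and $w_2 \, \Re \, u_2$. To produce $\kappa_2$, set $\mu(j) := \pi^M(w_1, j.m)$ and $\mu'(j) := \pi^M(w_2, j.m)$; by Lemma~\ref{lem:simdca} and Atoms for $\Re$ we have $\bigoplus_j \mu(j) = \pi^{M^a}(u_1, \xm) = \pi^{M^a}(u_2, \xm) = \bigoplus_j \mu'(j)$. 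Apply Lemma~\ref{lem:chaum} with this $\mu, \mu'$ and $\kappa := \kappa_1$ to obtain $\lambda =: \kappa_2$ that agrees with $\kappa_1$ on $\keys(i)$ and satisfies clause~(3) of Lemma~\ref{lem:simdc}. Combined with $w_1 \sim^M_i w_2$, Lemma~\ref{lem:simdc} gives $(w_1+\kappa_1) \sim^{M[DC(m)]}_i (w_2+\kappa_2)$, and $(w_2+\kappa_2) \, \Re' \, u_2$ holds by definition, completing the proof.
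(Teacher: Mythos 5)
Your proof is correct and follows essentially the same route as the paper's: the same lifted relation $(w+\kappa)\,\Re'\,u$ iff $w\,\Re\,u$, the same use of Lemma~\ref{lem:simdca} and Lemma~\ref{lem:simdc} to characterize the new indistinguishability relations, and the same invocation of Lemma~\ref{lem:chaum} with $\mu(j)=\pi^M(w_1,j.m)$, $\mu'(j)=\pi^M(w_2,j.m)$ to manufacture $\kappa_2$ in the Back direction. The only differences are cosmetic: you make explicit the XOR-cancellation argument in Forth and the coverage check, both of which the paper leaves implicit.
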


\begin{proof} 
Let $M = \langle W, \Agt ,\{\sim_{i}\}_{i\in \Agt}, Prop, \pi \rangle$ and
let $$M^a = \langle W^{a}, \Agt^{a}, \{\sim^{a}_{i}\} _{i\in \Agt^a}, Prop^{a}, \pi^{a} \rangle~.$$
We write 
$$M[DC(m)]= \langle W', \Agt ,\{\sim'_{i}\}_{i\in \Agt}, Prop', \pi' \rangle$$ and
$$M^a[DC^a(m)] = \langle W^{a'}, \Agt^{a}, \{\sim^{a'}_{i}\} _{i\in \Agt^a}, Prop^{a'}, \pi^{a'} \rangle~.$$
As noted above, we have $W^{a'} = W^a$ and 
$$W' = \{w+\kappa~| ~w\in W,~\kappa : E \rightarrow \{0,1\}\}~.$$ 

Let $R\subseteq W\times W^a$ be the bisimulation relation witnessing  $M \approx_{V,\Agt} M^a$. 
We define the relation $\Re \subseteq (W' \times W^{a'})$ as follows:
$w+ \kappa \, \Re\, w'$ if $w R w'$. 
We claim that this relation witnesses $M[DC(m)] \approx_{V\cup \{i.rr \,|\, i\in \Agt\}, \Agt} M^a[DC^a(m)]$.

{\em Atoms:} We need to check that for all $v\in V\cup \{i.rr \,|\, i\in \Agt\}$, 
if $w+ \kappa \, \Re\, w'$ then $\pi'(w+\kappa,v) = \pi^{a'}(w',v)$. For propositions $v\in V$, this 
is immediate from the facts that $w+ \kappa \, \Re\, w'$ implies $w R w'$, that $R$ is a $(V,\Agt)$-bisimulation, 
and that $\pi'(w+ \kappa,v) = \pi(w,v)$ and $\pi^{a'}(w',v) = \pi^{a}(w',v)$. For the variables $i.rr$, we argue as follows. Note that since 
the variables in $i.m$ are included in $V$, it follows that $\pi'(w+\kappa,i.m) = \pi^{a'}(w',i.m)$, 
and hence that  $\pi'(w+ \kappa,\xm) = \pi^{a'}(w',\xm)$.
As noted above, we have 
$\pi'(w+ \kappa,i.rr)  =   \pi'(w+ \kappa,\xm)$. 
By the program for $DC^a(m)$, we also have 
$\pi^{a'}(w',i.rr)= \pi^{a'}(w',\xm)$. 
Combining these equations yields $\pi'(w+ \kappa,i.rr) = \pi^{a'}(w',i.rr)$. 
Thus, we have that $\Re$ preserves all propositions in $V\cup \{i.rr \,|\, i\in \Agt\}$.

\textit{Forth:} Let $i\in \Agt$, $u+\kappa, v+\lambda \in W'$, and let $u^{a'} \in W^{a'}$ 
such that  $u+\kappa \sim' _{i} v+\lambda$
and $u+\kappa \,\Re\, u^{a'}$. We need to show that there exists $v^{a'} \in W^{a'}$ 
such that $v+\lambda\, \Re \,v^{a'}$ and $u^{a'} \sim^{a'}_{i} v^{a'}$. 
We argue as follows. From $u+\kappa ~ \Re ~ u^{a'}$ it follows that 
$u R u^{a'}$.  Also,  from $u+\kappa \sim'_{i} v+\lambda$ it follows 
by Lemma~\ref{lem:simdc} that $u \sim_{i} v$.
Since $R$ is a bisimulation, we obtain that there exists a world $v^a \in W^a$
such that $u^{a'} \sim^{a}_{i} v^a$ and $v R v^{a}$. Since $W^{a'} = W^{a}$ we may 
define $v^{a'}$ to be $v^a$. It is immediate from the definition of $\Re$ 
and the fact that $v R v^{a'}$ that $v +\lambda \Re v^{a'}$. 
To show   $u^{a'} \sim^{a'}_{i} v^{a'}$ we use the characterization of  $\sim^{a'}_{i} $
of Lemma~\ref{lem:simdca}. We already have that $u^{a'} \sim_{i} v^{a'}$ by construction, 
so it remains to show $\pi^a(u^{a'},\xm) = \pi^a(v^{a'},\xm)$. 

From the fact that 
$v R v^{a'}$, and that all variables in $i.m$ are in $V$, 
we have that $\pi(v,\xm) = \pi^{a}(v^{a'},\xm)$.
Similarly, from $u R u^{a'}$, 
we have that $\pi(u,\xm) = \pi^a(u^{a'},\xm)$. 
Further, since  $u+\kappa \sim'_{i} v+\lambda$, it follows 
by Lemma~\ref{lem:simdc} that  $\pi(u,\xm) = \pi(v,\xm)$. 
Combining these equations yields $\pi^a(u^{a'}\xm) = \pi^{a}(v^{a'},\xm)$, 
giving the remainder of what we require for the conclusion that 
$u^{a'} \sim^{a'}_{i} v^{a'}$.

\textit{Back:}  Let $i\in \Agt$, $u+\kappa \in W'$, and let $u^{a'}, v^{a'}  \in W^{a'}$ 
such that $u+\kappa \,\Re\, u^{a'}$ and  $u^{a'} \sim^{a'} _{i} v^{a'}$. 
We need to show that there exists $v + \lambda \in W'$ 
such that $u+ \kappa \sim'_i v+\lambda$ and $v+\lambda \, \Re \,v^{a'}$.
We identify the world $v\in W$ as follows. 
From $u+\kappa \,\Re\, u^{a'}$ we have  that $u R u^{a'}$ and 
from  $u^{a'} \sim^{a'} _{i} v^{a'}$ we have (by Lemma~\ref{lem:simdc})
that $u^{a'} \sim^{a} _{i} v^{a'}$. Since $R$ is a bisimulation, there exists a 
value $v\in W$ such that $u \sim_i v$  and $v R v^{a'}$. 

From $u^{a'} \sim^{a'} _{i} v^{a'}$ and Lemma~\ref{lem:simdca}, we obtain that 
$\pi^{a'}(u^{a'},\xm) = \pi^{a'}(v^{a'},\xm)$, 
hence also $\pi^{a}(u^{a'},\xm) = \pi^{a}(v^{a'},\xm)$. 
From the fact that $R$ is a bisimulation preserving the propositions $V$, 
we get from  $u R u^{a'}$ and  $v R v^{a'}$ that 
$\pi(u,\xm) = \pi^{a}(u^{a'},\xm)$ and $\pi(v,\xm) = \pi^{a}(v^{a'},\xm)$. 
Combining these equations yields $\pi(u,\xm) = \pi(v,\xm)$. 

Note that  $v R v^{a'}$  implies that $v+ \lambda \, \Re\, v^{a'}$ for all 
$\lambda: E \rightarrow \{0,1\}$, giving half of what we require. 
It therefore remains to find a value of $\lambda$ such that 
$u+ \kappa \sim'_i v+\lambda$. Since we already have $u\sim_i v$,  
 this requires, by Lemma~\ref{lem:simdc}, that we find $\lambda$ such that 
$\kappa(e) = \lambda(e)$ for all $e\in \keys(i)$ and 
$\pi^{M}(u,j.m) \xor \xor_{e\in keys(j)} \kappa(e) = \pi^{M}(v,j.m) \xor \xor_{e\in keys(j)} \lambda(e)$ for all 
$j\in \Agt$.  Since $\pi(u,\xm) = \pi(v,\xm)$, the existence of such a function $\lambda$ is guaranteed by Lemma~\ref{lem:chaum}, 
on taking $\mu(i) = \pi(u,i.m)$ and $\mu'(i) = \pi(v,i.m)$. 
\qed
\end{proof}

This result gives us that, modulo bisimulation, the programs $DC(m)$ and $DC^a(m)$ have the
same effect on the agent's mutual states of knowledge. We have a similar result if we consider 
the effect of joint actions $A$: 

\begin{lemma} \label{lem:actions} 
Let $M$ and $M'$ be Kripke structures such that $M\approx_{V,\Agt} M'$, and 
let $A$ be a joint action, writing variables $V_A$, such that $A$ is enabled at both $M$ and $M'$. 
Then $M[A] \approx_{V\cup V_A,\Agt} M'[A]$,
\end{lemma}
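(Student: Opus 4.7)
The plan is to produce a single bisimulation on $M[A]$ and $M'[A]$ by lifting the given bisimulation $R$ that witnesses $M \approx_{V,\Agt} M'$ across the shared random-bit part of the action $A$. Let $V_R \subseteq V_A$ denote the set of variables $i.x$ written by subactions of the form $i:\rand(x)$ in $A$. Because $A$ is a fixed joint action, the worlds of both $M[A]$ and $M'[A]$ have the form $w+\kappa$ with $\kappa:V_R\to\{0,1\}$ drawn from the same space. I would therefore define $\Re \subseteq W^{M[A]}\times W^{M'[A]}$ by $(w+\kappa)\,\Re\,(w'+\kappa')$ iff $wRw'$ and $\kappa=\kappa'$. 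Totality in both directions is immediate: given $w+\kappa\in W^{M[A]}$, apply totality of $R$ to pick $w'\in W^{M'}$ with $wRw'$ and retag with the same $\kappa$, and symmetrically.

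Next I would verify the three clauses. For \textbf{Atoms}, I would case-split on the variable. For $v\in V$ the equality $\pi^{M[A]}(w+\kappa,v)=\pi^{M'[A]}(w'+\kappa',v)$ is inherited directly from $R$-preservation of $V$, since by the definition of the semantics of actions the new valuations extend the old on previously defined variables. For $v\in V_A$ I would split on the shape of the subaction writing $v$. Assignments $i:x:=e$ and channel transmissions $i:e\to j.x$ both yield $\pi^M(w,i.e)$, which agrees with $\pi^{M'}(w',i.e)$ because the variables of $i.e$ lie in $V$ (this follows from enablement of $A$ at the given observability map, which must be contained in $V$ for the bisimulation to mean anything on the read side). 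Random subactions $i:\rand(x)$ yield $\kappa(i.x)=\kappa'(i.x)$, which matches since $\kappa=\kappa'$. Broadcasts write no new variable, so there is nothing to check.

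For \textbf{Forth}, suppose $(u+\kappa)\sim^{M[A]}_i(v+\lambda)$ and $(u+\kappa)\,\Re\,(u'+\kappa')$. By the construction of $\sim^{M[A]}_i$ I extract $u\sim^M_i v$; from $\Re$ I have $uRu'$ and $\kappa=\kappa'$. The forth clause for $R$ then gives $v'\in W^{M'}$ with $u'\sim^{M'}_i v'$ and $vRv'$. I would propose $v'+\lambda$ as the witness. Then $(v+\lambda)\,\Re\,(v'+\lambda)$ holds by construction, and $(u'+\kappa')\sim^{M'[A]}_i(v'+\lambda)$ reduces, via the definition of the lifted indistinguishability, to checking agreement on $\omap[A](i)\setminus\omap(i)$: each such new variable takes a value determined either by an expression read at $u',v'$ (which matches the value at $u,v$ by the atoms clause already proved, hence agrees across $\sim^{M'}_i$ because the corresponding values on the $M$-side agreed by $\sim^{M[A]}_i$) or by $\kappa,\lambda$ restricted to $i$-observable random variables (where agreement is provided directly by $\sim^{M[A]}_i$). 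The \textbf{Back} clause is symmetric, using the back clause of $R$ and the same ``retag with $\lambda$'' construction.

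The main obstacle is not combinatorial but bookkeeping: one must track carefully that every variable read by an expression in $A$ is in the index set $V$, and that the $i$-observable subset of the newly introduced variables is precisely what $\sim^{M[A]}_i$ forces to agree. In contrast to Theorem~\ref{thm:abstractdc}, no reshaping argument in the spirit of Lemma~\ref{lem:chaum} is needed here, because the random variables of $A$ have the same identity on both sides, so transporting a random assignment is trivial; all the genuine content already lives in $R$.
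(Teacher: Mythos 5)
Your construction is exactly the paper's: lift the given bisimulation $R$ to the relation $(w+\kappa)\,\Re\,(w'+\kappa')$ iff $wRw'$ and $\kappa=\kappa'$, the paper merely dismissing the verification of the three clauses as ``easily shown.'' Your more detailed check is sound, and you correctly flag the one point the lemma statement leaves implicit --- that the expressions read by $A$ must depend only on variables in $V$ for the Atoms clause to go through on the deterministically written variables --- a hypothesis the paper only makes explicit later, in Theorem~\ref{thm:main}.
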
 
\begin{proof} 
Suppose $R$ is a bisimulation witnessing $M\approx_{V,\Agt} M'$, 
and we represent the worlds of $M[A]$
as $w+\kappa$ where $w$ is a  world of $M$ 
and $\kappa: V_A \rightarrow \{0,1\}$, where 
$\pi^{M[A]}(w + \kappa,v) =  \kappa(v)$ for $v\in V_A$. 
(This requires some constraints on the set of $w+\kappa$, to handle the case of 
variables $v\in V_A$ that are not written by $i:\rand(v)$ statements.) 
The worlds of $M'[A]$ may be similarly represented as $w+\kappa$ where $w$ is a world of $M'$. 

Then it is easily shown that the  relation $r'$ defined by 
$u+\kappa \,R'\, v+\lambda$ if $uRv$ and $\kappa = \lambda$ 
is a bisimulation.  \qed
\end{proof}

Combining Theorem~\ref{thm:abstractdc} and Lemma~\ref{lem:actions}, we obtain the following
by a straightforward induction.
(Note that we 
use fresh variables  
$k_e, b, rr, x_i$ and $y$ in each of the instances of $DC_i$ and $DC^a_i$.)

\begin{theorem} \label{thm:main} 
Let $M$ and $M^a$ be Kripke structures with $M\approx_{V,\Agt} M^a$, and 
let   
$$\begin{array}{c}
P= Q_1; DC(m_1); Q_2 ;  DC(m_2); \ldots  DC(m_k); Q_{k+1}~~{\it and} \\[5pt] 
P^a = Q_1; DC^a(m_1); Q_2 ;  DC^a(m_2); \ldots  DC^a(m_k); Q_{k+1}
\end{array}$$
where the $Q_i$ are programs involving agents $\Agt$. 
Let  $V'$ be the set of all variables written by the programs $Q_i$, 
as well as the variables $i.rr$ introduced by the DC instances.  Assume that 
the $Q_j$ and $m_j$ read only variables from $V\cup V'$. 
Then if $P$ is enabled at $M$, and $P^a$ writes no variable in $M^a$,  then 
$P^a$ is enabled at $M^a$ and $M[P]\approx_{V\cup V',\Agt} M^a[P^a]$. 
\end{theorem}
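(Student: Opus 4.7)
The plan is to proceed by induction on $k$, the number of invocations of the Dining Cryptographers protocol. Before carrying out the induction, I would first prove an auxiliary extension of Lemma \ref{lem:actions} from a single joint action to a finite sequence of joint actions: under the same hypotheses (bisimulation on $V$, program $Q$ enabled at both structures), the resulting structures are bisimilar on $V \cup V_Q$, where $V_Q$ is the set of variables written by $Q$. This is an immediate induction on the length of $Q$, applying Lemma \ref{lem:actions} at each step and noting that enabledness of the tail follows from enabledness of $Q$ together with the way the observability map evolves under the prefix.

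With this extension in hand, the base case $k=0$ (so $P = P^a = Q_1$) is immediate. For the inductive step, I would split $P = P_0; DC(m_k); Q_{k+1}$ and $P^a = P_0^a; DC^a(m_k); Q_{k+1}$, where $P_0$ and $P_0^a$ end with $Q_k$ and $DC(m_{k-1})$/$DC^a(m_{k-1})$ respectively. Applying the induction hypothesis to $P_0$ and $P_0^a$ yields $M[P_0] \approx_{V \cup V_0', \Agt} M^a[P_0^a]$ for the subset $V_0'$ of $V'$ comprising the variables written by $Q_1, \ldots, Q_k$ together with the $i.rr$ variables introduced by the first $k-1$ DC instances. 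Since $m_k$ reads only variables in $V \cup V'$ and since each variable in $V'$ is written at most once, every variable $m_k$ reads has already been introduced before the $k$th DC step, so lies in $V \cup V_0'$; hence the hypothesis of Theorem \ref{thm:abstractdc} is satisfied. That theorem then gives bisimilarity after $DC(m_k)$ and $DC^a(m_k)$, enlarged by the new $i.rr$ variables. A final application of the extended Lemma \ref{lem:actions} with $Q_{k+1}$ absorbs the remaining variables of $V'$ and yields the conclusion.

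The main obstacle is verifying the enabling conditions, especially for $P^a$, which is to be concluded rather than assumed. Enabledness of $DC(m_k)$ at $M[P_0]$ is immediate from enabledness of $P$ at $M$ and the definition of enabledness along program prefixes. For $DC^a(m_k)$ at $M^a[P_0^a]$ one argues in two parts: the expressions it evaluates involve only variables in $V \cup V_0'$, which are present in $M^a[P_0^a]$ by the inductive bisimulation and the fact that a $(V \cup V_0',\Agt)$-bisimulation requires both sides to define exactly these variables; and the variables it writes are fresh in $M^a[P_0^a]$ because $P^a$ writes no variable in $M^a$ to begin with and, by the fresh-variables convention stated just before the theorem, distinct $DC^a$ instances use disjoint names from one another and from the $Q_j$. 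The genuine semantic content is supplied entirely by Theorem \ref{thm:abstractdc} and Lemma \ref{lem:actions}; the rest is bookkeeping that tracks which subset of $V'$ has accumulated at each intermediate structure and why fresh names avoid collisions.
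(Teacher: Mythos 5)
Your proposal is correct and takes essentially the same route as the paper, which gives no detailed argument at all: it merely states that the theorem follows ``by a straightforward induction'' combining Theorem~\ref{thm:abstractdc} and Lemma~\ref{lem:actions}, with a parenthetical remark about fresh variables in each $DC_i$/$DC^a_i$ instance. Your write-up is a faithful expansion of exactly that induction, including the bookkeeping (extension of Lemma~\ref{lem:actions} to program segments, tracking of the accumulated variable set, and the enabledness of $P^a$) that the paper leaves implicit.
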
 

This result states that if we have a complex protocol $P$, constructed by using 
multiple instances of the DC protocol interleaved with other actions, 
then we abstract $P$ by abstracting each of the instances of $DC$ to $DC^a$, 
while preserving the truth values of all epistemic formulas.  This enables 
optimization of model checking epistemic formulas in $M[P]$ by 
applying model checking to $M[P^a]$ instead.  (Note that always $M \approx M$.)

\section{The Two-phase Anonymous Broadcast Protocol}\label{sec:rounds} 

As noted above, the basic version of the Dining Cryptographers protocol enables a
signal to be anonymously transmitted under the assumption that at most one 
agent wishes to transmit.  One of Chaum's considerations is the use of the protocol for
more general anonymous broadcast applications, and he writes: 
\begin{quote} 
The cryptographers become intrigued with the ability to make messages
public untraceably.  They devise a way to do this at the table for a
statement of arbitrary length: the basic protocol is repeated over and
over; when one cryptographer wishes to make a message public, he merely
begins inverting his statements in those rounds corresponding to 1's in
a binary coded version of his message. If he notices that his message
would collide with some other message, he may for example wait for a
number of rounds chosen at random from some suitable distribution
before trying to transmit again.
\end{quote}  

As a particular
realization of this idea, he discusses grouping communication into blocks and 
the use of the following 
{\em two-phase broadcast} protocol using {\em slot-reservation}:  
\begin{quote} 
In a network with many messages per block, a first block may be used by various anonymous senders 
to request a ``slot reservation'' in a second block. A simple scheme would be for each 
anonymous sender to invert one randomly selected bit in the first block for each slot they wish to reserve
in the second block. After the result of the first block becomes known, the participant 
who caused the ith bit in the first block sends in the ith slot of the second block. 
\end{quote} 
This idea has been implemented as part of the Herbivore
system\cite{Herbivore}.  

Chaum's  discussion leaves open a number of questions concerning the
protocol.  For example, what exact test is applied to
determine whether there is a collision? 
Which agents are able to detect a collision? 
Are there situations where some agent expects
to receive a message, but a collision occurs that it 
does not detect (although some other agent may do so?) 
Under what exact circumstances does an agent know that 
some agent has sent a message? When can a sender be 
assured that all others have received the message? 

In previous work, we have studied such questions in a 
3-agent version of the protocol \cite{AlBatainehMeyden10}. 
Our approach was to model the protocol as a knowledge-based program 
and to use epistemic model checking as a tool to help us identity precisely 
the conditions under which an agent obtains some types of knowledge of interest. 
The approach helped us to identify some unexpected situations in which 
relevant knowledge is obtained. 
We recap the definition of knowledge-based programs and our
formulation of the 2-phase protocol as a knowledge-based program 
in the following sections, after which we study this knowledge-based program 
further using the abstraction developed above. 

\section{Implementation of Knowledge-based Programs} \label{sec:kbp} 

Knowledge-based programs \cite{FHMVbook} are like standard programs, except that 
expressions may refer to an agent's knowledge. 
That is, in a knowledge-based 
program for agent $i$, we may find 
statements 
 of the form
 ``$v:= \phi$", 
where $\phi$ is a formula of the logic of knowledge, i.e., 
a boolean combination of atomic formulas concerning the agent's observable variables
and formulas of the form $K_i\psi$.

Unlike standard programs, knowledge-based programs cannot in general be directly executed,
since the satisfaction of the knowledge subformulas depends on the
set of all runs of the program,  which
in turn depends on the satisfaction of these knowledge subformulas. 
This apparent circularity is handled by treating a knowledge-based program as a  specification, and 
defining when a concrete standard program satisfies this specification. 
We give a formulation of the semantics of knowledge-based programs tailored to the 
programming language of the present paper.

Suppose that we have a concrete program $P$ of the same syntactic structure as the knowledge-based program ${\bf P}$, 
in which each knowledge-based expression 
$\phi$ is replaced by a concrete predicate $p_\phi$ of the local variables of the agent. 
Starting at an initial Kripke Structure $M_0$, 
the concrete program $P$ generates a set of runs that form the worlds of a
Kripke Structure $M_0[P]$. 
We now say that $P$ is an {\em implementation of the knowledge-based program ${\bf P}$ from $M_0$}  if
for each joint action $A$ in the program $P$, corresponding to a joint action ${\bf A}$ in the 
knowledge-based program,  if we write $P = P_0;A;P_1$, where 
$P_0$ and $P_1$ are programs, then for each knowledge condition  $\phi$ occurring in ${\bf A}$, 
we have  $M_0[P_0]\models p_\phi \dimp \phi$.
That is, the concrete condition is equivalent to the knowledge condition in the implementation
at each point in the program where it is used. 
(In a more general formulation, where knowledge conditions may contain temporal operators,  
knowledge-based programs may have no implementations, a behaviourally unique implementation, or
many implementations, but for the restricted language we consider it can be shown that there is a unique implementation.) 

We now describe a partially automated process, using epistemic model checking, 
that can be followed to find implementations of knowledge-based programs ${\bf P}$. 
The user begins by introducing a local boolean variable $v_\phi$ for each knowledge formula
$\phi= K_i\psi$ in the knowledge-based program, and replacing $\phi$ by $v_\phi$. 
Treating $v_\phi$  as a ``history variable'', the user may also add to the program statements of the form $v_\phi:=e$, 
relying on their intuitions concerning situations under which the epistemic formula $\phi$ will be true. This produces a
standard program $P$ that is a candidate to be an implementation of the knowledge-based program
${\bf P}$. 
(It has, at least,  the correct syntactic structure.) 
To verify  the correctness of $P$ as an implementation of ${\bf P}$, the user must now 
check that the variables $v_\phi$ are being maintained so as to be equivalent to the 
knowledge formulas that they are intended to express. This can be done using epistemic model checking, 
where we verify formulas of the form $v_\phi \dimp K_i\psi$
at points in the program where the condition $\phi$ is used.  

In general, the user's guess concerning the concrete condition that is equivalent to the  knowledge formula 
may be incorrect, and the 
model checker will report the error. In this case, the model checker can be used  to generate an {\em error trace}, a 
partial run leading to a situation that falsifies the formula being checked. 
The next step of our process requires the user to analyse this error trace (by inspection and human reasoning) 
in order to understand the source of the error in their guess for the concrete condition representing the knowledge formula. 
As a result of this analysis, a correction of the assignment(s) to the variable $v_\phi$ is made 
by the user (this step may require some ingenuity on the part of the user.) 
The model checker is then invoked again to check the new guess. 
This process is iterated until a guess is produced for which all the formulas of interest are found to be true, 
at which point an implementation of the knowledge-based program has been found. 
We refer the reader to our previous work \cite{AlBatainehMeyden10} for further discussion and examples of the 
application of this iterative process. (We deemphasize the process in the present paper, and 
focus on the results.)

\section{The Two-phase Broadcast Protocol as a Knowledge-based Program}\label{sec:dckbp}

We now 
give a formulation 
of Chaum's two-phase protocol (see Section~\ref{sec:rounds}) 
as a knowledge-based program, and discuss the associated verification conditions. 
(The knowledge-based program is similar to that given in our earlier work, but 
includes some improvements.) 

We assume that there are $n$ agents, and $\Agt= \{1..n\}$. 
Figure~\ref{fig:kbp} represents the 2-phase protocol by giving a knowledge-based program 
for agent $i$. 
The local variable \verb+slot-request+, assumed to be defined 
in the structure from which the program is run,  records the slot number 
(in the range 1..n)  that this  agent will attempt to reserve. If \verb+slot-request+=0, then the 
agent will not attempt to reserve any slot.  The variable
\verb+message+, also assumed to be defined, records the single bit message that the agent 
wishes to anonymously broadcast (if any). 
The program introduces the 
variables $\rcvd0$ and $\rcvd1$, 
as well as a variable $\dlvrd$. (Additional new variables, 
are implicit in the instances of DC$_i$.) 

\begin{figure} [h]
{\bf P}$_i$ = 
 \{\\
local variables:\\ 
\hspace*{20pt}  \slotrequest : [0..n],\\
\hspace*{20pt}  \msg : Bool,\\
\hspace*{20pt}  \rcvd0, \rcvd1, \dlvrd : Bool;\\
//reservation phase\\
\verb+for+ ($s =1$; $s \leq n$; $s$++)\\ 
  $\{$\\
\hspace*{5pt}   DC$_i$(\slotrequest=$s$);\\
  $\}$ \\
//transmission phase\\
\verb+for+ ($s =1$; $s \leq n$; $s$++) \\
  $\{$ \\
\hspace*{5pt} DC$_i$(\verb+if+ ($\slotrequest=s$ $\land$ $\neg K _{i}(\conflict(s))$\\
\hspace*{20pt} \verb+then+  \msg \\
\hspace*{20pt} \verb+else+   false) );\\
  $\}$ \\
 \rcvd0:= $K_i(\sender(i,0))$; \\ 
 \rcvd1 := $K_i(\sender(i,1))$;  \\ 
\dlvrd := $
\begin{array}[t]{r}
\bigwedge_{x\in Bool } ((\mbox{$\msg = x$} ~ \land ~\mbox{$\slotrequest \neq 0$}) \rimp\\  K_i ( \bigwedge_{j\neq i} K_j\sender(j,x)))
 \end{array}
 $\\
\}
\textbf{ \caption{ \label{fig:kbp}The knowledge-based program $CDC$}}
\end{figure}

The term $\conflict(s)$ in the knowledge-based program 
represents that there is a conflict on slot $s$. 
This is a global condition that is defined as
$$ \conflict(s) = \bigvee_{i\neq j} (i\verb+.slot-request+= s= j\verb+.slot-request+)~.$$
i.e., there exist two distinct agents $i$ and $j$ both requesting slot $s$. 

The term $\sender(i,x)$ represents that an agent 
is sending message $x$.
Thus, the variable 
\verb+rcvd0+ is assigned to be true if the agent  learns that someone 
is trying to send the bit 0, and similarly for \verb+rcvd1[s]+. 
However, there are some subtleties in the implementation that lead us to consider 
two distinct versions of the program. 
In one version, called {\em strong reception}, we use the definition  
$$ \sender(i,x) = \bigvee_{j\neq i} (j.{\tt message} = x \land j.\slotrequest \neq 0)~.$$
That is, we take an agent to have received the bit $0$ if it knows that {\em some other} 
agent is sending the message $x$. In the other, that we refer to as {\em weak reception}, 
we define 
$$ \sender(i,x) = \bigvee_{j} (j.{\tt message} = x \land j.\slotrequest \neq 0)~.$$
That is, we take an agent to have received the bit $0$ if it knows that {\em some} 
agent is sending the message $x$, possibly itself. Since an agent always knows 
its own message $x$, it trivially knows $ \sender(i,x)$   if it is trying to send a message (i.e., $i.\slotrequest \neq 0$), 
so this may seem very weak. However, since other agents may consider it possible that the agent is 
not seeking to send a message, we see that it becomes of greater interest in the context of
an agent's knowledge of delivery of its message, represented by 
the assignment for the variable $\dlvrd$. 

We note that this representation of the 2-phase protocol as a 
knowledge-based program 
is {\em speculative}: an agent transmits in a slot so long as it does not know 
that there is a conflict. This allows that a collision will occur during the transmission phase. 
   
Since an agent may attempt to reserve a slot, and then back off, 
or may send in a reserved slot without success because of a collision during the transmission phase, the protocol does not 
guarantee that the message will be delivered. In this case, the 
agent is required to retry the transmission in the next run of the protocol. 
So  that  it can determine whether a retry is necessary, the final assignment 
to the variable $\dlvrd$ captures  whether the agent knows that 
its (anonymous) transmission has been successful. this assignment captures that 
the transmission is successful if the agent knows that the other agents
know that some agent is sending its message.  
We similarly refer to {\em weak delivery} and {\em strong delivery}  
depending on which version of the predicate $ \sender(i,x)$ is used.%
\footnote{We remark that in case of weak delivery, replacing the expression $\bigwedge_{j\neq i} K_j\sender(j,x)$
by $\bigwedge_{j} K_j\sender(j,x)$ in the assignment to $\dlvrd$ would have no effect, since in the weak case it always holds 
that $(i.\msg = x ~ \land ~i.\slotrequest \neq 0 )\rimp K_i (\sender(i,x)) $.
} 

We remark that the knowledge-based program  is interpreted with respect to 
the assumption of perfect recall, and implementations may make use of of history variables 
to capture observations that the agent makes during the running of the protocol. 
Thus, by placing the reception and delivery assignments at the end of the program (rather than just after each DC instance), 
we ensure that the agents are able to behave optimally by making use of all information they gather during the running of the program. 
As we discuss below, this allows us to capture some subtle sources of information. 

In Figure~\ref{fig:imp}, we give the generic structure of a possible implementation of the 
knowledge-based program, as we seek using our partially-automated process. 
The variable \verb+kc[s]+ is used to represent the epistemic condition
concerning conflict  in the knowledge-based program (i.e., $\neg K _{i}(\conflict(s))$). 
Thus, in verifying that we have an implementation, the 
key condition to be checked is whether $\verb+kc[s]+\dimp \neg K _{i}(\conflict(s))$ 
just after this variable is  assigned. 
The main difficulty in finding an implementation is to find the appropriate  concrete 
assignment (to take the place of the ``???") for this variable that will make this condition valid. 
Similarly we seek assignments to the variables \verb+rcvd0[s], recvd1[s]+ that give these the intended meaning.

\begin{figure} [h]
$P_i$ =  \{ \\
local variables:\\ 
\hspace*{20pt}  \slotrequest: [0..n],\\
\hspace*{20pt}  \msg: Bool,\\
\hspace*{20pt}  \rcvd0, \rcvd1, \dlvrd: Bool,\\
\hspace*{20pt}  \kc[n]: Bool;\\
//reservation phase\\
\verb+for+ ($s =1$; $s \leq 3$; $s$++)\\ 
  $\{$\\
  \hspace*{5pt}   DC$_i$(\slotrequest == $s$); \\
  $\}$ \\
//transmission phase\\
\verb+for+ ($s =1$; $s \leq n$; $s$++) \\
  $\{$ \\
\hspace*{5pt} \kc[s] :=???; \\
\hspace*{5pt} DC$_i$(\verb+if+ (\slotrequest == s $\land$ \kc[s])\\
\hspace*{20pt}   \verb+then+  \msg \\
\hspace*{20pt}   \verb+else+  false);\\
  $\}$\\
 \rcvd0 := ???; \\ 
 \rcvd1 := ???;\\ 
  \dlvrd := ??? \\ 
  \}
\textbf{ \caption{\label{fig:imp}A generic implementation of  CDC}}
\end{figure}

We note that each of the instances of the protocol $DC_i$ introduces additional 
variables, which may be used in the concrete predicates we substitute for the 
``???". In particular, they introduce round result variables, which we denote by 
\verb+rr[t]+  for $t\in \{1..2n\}$. Here $rr[t]$ represents the round result variable
from the $t$-th instance of $DC_i$ in the implementation.  The implementations
also introduce key variables $k_e$ and $b$, which need to be 
separated in the different instances: we may similarly use $k_e[t]$ and $b[t]$
to denote the $t$-th instance of such a variable. 

We now discuss the formulas that are used to verify the implementation.  
As discussed above, these conditions need to be verified at specific stages of the program, 
viz., the step before the occurrence of the knowledge formula of interest. 

The first formula of interest concerns the correctness of the 
guess for the knowledge condition $\neg K_i(\conflict(s))$ (in case of the 
speculative implementation, or $K_i(\neg \conflict(s))$ (in the case of the 
conservative implementation). In the implementation, 
this condition is represented by the variable \verb+kc[s]+. 

\textit{Specification 1:
{\tt kc[s]}  correctly represents knowledge of the existence of a conflict in slot $s=1..3$. }
$$ i.\verb+kc[s]+ \dimp \neg K_{i} (\conflict(s))~~~~~~~(1)$$

Next, the protocol has some positive goals, viz., to allow agents to broadcast some 
information, and to do so anonymously. Successful reception of a bit 
 is intended to be represented by the variables 
 \verb+rcvd0+ and \verb+rcvd1+. 
To ensure that the assignments to these variables correctly implement their
intended meaning in the knowledge-based program, we use specifications of the
following form. 

\textit{Specification 2: reception variables  correctly represent transmissions by others} 

$$ i.\rcvd0 \dimp K_i(\sender(i,0) ~~~~~~~~~(2a)$$ 
and 
$$ i.\rcvd1 \dimp K_i(\sender(i,1)) ~~~~~~~~~ (2b)$$ 

Similarly, we need to verify correct implementation of the agent's knowledge about whether its transmission is 
successful.  

\textit{Specification 3: delivery variables  correctly represent knowledge about delivery} 
$$
\begin{array}{l} 
 i.\dlvrd \dimp 
\bigwedge_{x\in Bool} (i.\msg = x  \land i.\slotrequest \neq 0\\
~~~~~~~~~~~~~~~~~~~~~~~~~ \rimp K_i ( \bigwedge_{j\neq i} K_j\sender(j,x)))
\end{array} 
$$

There are strong and weak versions of Specifications 2 and 3, depending on the choice for $\sender(i,x)$. 

Finally, the aim of the protocol is to ensure that when information is transmitted, this 
is done anonymously. An agent may know that one of the other two agents 
has a particular message value, but it may not know what that value is
for a specific agent. We may write the fact that agent $i$ knows the 
value of a boolean variable $x$ by the notation $\hat{K}_i(x)$, defined 
by $\hat{K}_i(x) = K_i(x) \lor K_i(\neg x)~.$
Using this, we might first attempt to specify anonymity as
$\bigwedge_{j\neq i} ( \neg \hat{K}_i(j.\msg))$, i.e., agent $i$ knows no other's message. 
Unfortunately, the protocol cannot be expected to satisfy this: suppose that 
all agents manage to broadcast their message and all messages have the same value
$x$: then each knows that the other's value is $x$. We therefore write the 
following weaker specification of anonymity: 

\textit{Specification 4:  The  protocol preserves anonymity}
$$  \bigvee_{x=0,1} K_i( \bigwedge_{j\neq i} (j.\msg = x)) \lor 
\bigwedge_{j\neq i} ( \neg \hat{K}_i(j.\msg))~.$$ 
This is checked at the very end of the protocol.

\section{Model Checking Performance}\label{sec:results} 

To verify the specifications for the knowledge-based program in a putative implementation, 
we have applied  the epistemic model checker MCK \cite{mck}. We refer the reader to 
our previous work \cite{AlBatainehMeyden10} for a description of some 
of the particularities of how this is done. Since the details are straightforward, 
we focus here on how the abstraction developed in this paper 
impacts the performance of model checking. 

We would like to verify whether a putative implementation $P$ implements the knowledge-based program ${\bf P}$ 
from an initial structure $M_0$. This requires that we model check the formulas from the previous section. 
Since these formulas concern only the initial variables of the agents, and variables introduced outside
the scope of the $DC_i$ calls, it follows from Theorem~\ref{thm:main} that we may 
verify instead whether these formulas hold at appropriate times during the running of the 
abstract program $P^a$ that we obtain by replacing each instance of $DC_i$ in $P$ by $DC_i^a$. 

We have performed some experiments in which we use MCK for this model checking problem. 
MCK is a symbolic model checker, and model checking a formula 
involves first building a symbolic (Binary Decision Diagram \cite{Clarke99}) representation of the 
model itself, and then using this representation in the construction of a symbolic 
representation of the situations where the particular formula of interest is false.  
All specifications are checked using  the perfect recall interpretation of knowledge
and the model checking algorithm for this semantics which is described in \cite{MS}
(which is flagged by {\tt spec\_spr\_xn} in MCK). 
To estimate individual formula timings, we deduct model construction times 
(estimated by the time to model check  the specification True), from the actual time for model checking each specification 
(which includes model construction and formula verification time.)  
All experiments are conducted on a PC with Intel(R) Xeon(R) 4 $\times$ 3 GHZ, and 16 GB memory,
using 
MCK 0.1.1. Where the execution crashed due to a memory error, we report ``x" in the tables.
  
Our methodology for identifying an implementation of the knowledge-based program 
requires that we perform model checking on number of different approximations to the 
final implementation, and, when a specification fails, using the counter-example found to revise the approximation. 
Table~\ref{table:results-initial} gives the runtimes for the initial program, in which we 
guess the predicate False for the implementation of all knowledge formulas in the
knowledge-based program. For each specification $x$ we give runtimes 
for model checking the specification in the concrete program and the abstract program (indicated by $x^a$). 
We count the cost of verifying all instances of the specification required to check the 
correctness of the implementation at different times where the knowledge condition 
occurs in the program. (With $n$ agents, we need to check 
Specification 1 at $n$ locations in the implementation, but specifications  2-4 just once.)  
As we improve the approximation, the program becomes more complex, 
and the model checking runtimes increase. In Table ~\ref{table:results-final} we give the 
runtimes for the final approximation, in which we have identified a program that is verified
as implementing the knowledge-based program. 
 
\begin{table}[h]
\begin{center} 
\begin{tabular}{|c||c|c||c|c||c|c||c|c||c|c|} 
\hline 
\multicolumn{3}{|c|}{} & \multicolumn{8}{c|}{Specification} \\
\hline 
$n$  & Model & Model$^a$ & 1 & 1$^a$ & 2 & 2$^a$ & 3 & 3$^a$ &  4 & 4$^a$  \\
\hline 
3 & 0.4 & 0.24& 43& 5 & 5880 & 41 & 6100 & 4 & 6300 & 5 \\ 
4 & 29.15& 4.2 & x & 34 &x & 68 & x & 69 & x & 70 \\ 
5 & x& 63& x& 4800& x&5400 & x& 5500& x& 5544\\  
\hline 
\end{tabular} 
\end{center} 
\caption{Model Checking Runtimes (seconds)-- initial approximation \label{table:results-initial}}
\end{table} 
\begin{table}[h]
\begin{center} 
\begin{tabular}{|c||c|c||c|c||c|c||c|c||c|c|} 
\hline 
\multicolumn{3}{|c|}{} & \multicolumn{8}{c|}{Specification} \\
\hline 
$n$  & Model & Model$^a$ & 1 & 1$^a$ & 2 & 2$^a$ & 3 & 3$^a$ &  4 & 4$^a$ \\
\hline 
3 & 0.45 & 0.4& 50 & 16 & 7200 &127 & 7350& 34& 7400& 18\\ 
4 & 135& 6& x &167& x & 378& x & 251& x & 252 \\ 
5 &x & 74& x& 1096 & x& 1957 & x& 1979&x &1998 \\  
\hline 
\end{tabular} 
\end{center} 
\caption{Model Checking Runtimes (seconds) -- final implementation \label{table:results-final}}
\end{table} 

For a more detailed indication of the impact of the abstraction, 
Table~\ref{table:results-rounds} compares the runtimes for model checking the anonymity specification  
(Specification 4) in the concrete and abstract programs for the final implementation after a given number of rounds
of the Dining Cryptographers Protocol. Note that the maximum number of rounds of Dining Cryptographers 
in the 2-phase protocol is twice the number of agents. 

\begin{table}[h]
\begin{center} 
\begin{tabular}{|c|c||c|c|c|c|c|c|c|c|c|c|} 
\hline 
\multirow{2}{1cm}{Agents}& \multirow{2}{1cm}{version}& \multicolumn{10}{c|}{Rounds} \\  
 & & 1 & 2 & 3 & 4 & 5 & 6 & 7 & 8 & 9 & 10 \\ 
\hline 
 3 & concrete & 0.6&0.9& 2.2& 18&335&7350&- &- &- & - \\ 
  3 & abstract &0.5& 0.6&0.7&1.6& 3.1& 17.8&- &- &- & - \\ 
  \hline
 4 & concrete  &340& 575&587&1478&2661&x& x& x&- & - \\ 
  4 & abstract &9&11&11.2&11.7& 32& 85& 86 &249 &- & - \\ 
  \hline 
   5 & concrete &x&x&x&x&x&x& x& x& x&x \\ 
 5 & abstract   &91& 110& 133& 134& 183& 311& 752& 722& 950& 1990\\ 
\hline 
\end{tabular} 
\end{center} 
\caption{Model Checking Runtimes (seconds) for Specification 4 \label{table:results-rounds}}
\end{table} 

In all these experiments, the runtimes obtained indicate that the abstraction results in 
a significant decrease of runtimes, (in some cases of several orders of magnitude) 
and helps to bring problems of larger scale (in particular, with larger numbers of agents 
and greater numbers of rounds of the basic Dining Cryptographers protocol) within the bounds of 
feasibility of model checking.  

\section{Implementations of  the knowledge-based program}\label{sec:imp-results} 

Using the optimization obtained from the abstraction, we have been able to extend 
our previous analysis of the knowledge-based program in the 3-agent case 
to the cases of 4 and 5 agents, gaining more insight into the $n$-agent case for general $n$. 
We now describe the implementations we found for the program, 
which demonstrate that the protocol contains some further subtle flows
of information beyond those we found in the 3 agent case. 

One point worth noting is that, in addition to providing an optimization of 
epistemic model checking, our abstraction result also provides information that is useful in 
the search for an implementation of the knowledge-based program. Observe that the 
variables  $k_e$ do not occur in the abstract version of the protocol, 
nor in the formulas we need to check to verify an implementation. 
Thus, in guessing a concrete predicate to be substituted for one of the 
knowledge conditions, we can confine our attention to predicates that 
do not contain the $k_e$ variables. Indeed, since $i.b$ is computed from 
information already at agent $i$'s disposal, we need only consider 
predicates based on agent $i$'s initial information and the round result
variables $rr[k]$. 

The first knowledge condition we need to implement, for Specification 1, 
is  $\neg K_i \conflict(s)$. Plainly, one situation 
where an agent knows that there is a conflict is when it 
attempts to reserve a slot and the round result for the reservation 
is not $1$. (So an even number of agents attempted to reserve the slot.) 
Thus, one potential implementation for  $\neg K_i \conflict(s)$ is the assignment 
 $kc[s] := \neg (\slotrequest =s \land rr[s] = 0)$. Model checking Specification 1 for this 
predicate at the point of the $s$-th transmission  confirms 
in all of the cases $n=3,4,5$ that this captures the knowledge condition 
$\neg K_i \conflict(s)$ exactly at this  point: there are no other ways 
that the agent can know of a conflict on a slot before transmitting on it, 
besides seeing a reservation clash. 
(In particular, previous transmissions do not contain any relevant information.)   

It is interesting to consider not just the knowledge condition 
$\neg K_i \conflict(s)$ that occurs in the program, but also the 
stronger condition $K_i \neg \conflict(s)$ (the formula $K_i \neg p \rimp \neg K_i p$ is a 
validity of the logic of knowledge). For example, if an agent who is broadcasting 
on a slot knows that all other agents know the slot is conflict free, then 
it knows that its message will be delivered. Thus, we have also added a local variable 
$\conflictfree(s)$ to the implementation, for $s= 1\ldots n$,  and 
and sought assignments to this variable that satisfy the formula 
$i.\conflictfree(s) \dimp K_i \neg \conflict(s)$. This turns out to be 
quite a subtle matter. 

To express this condition, it is useful to introduce a formula $C_0 = x$ where $x\in \{0,\ldots,n\}$ 
to express that the number of $0$'s obtained as round results in the reservation phase 
is $x$. We may then note the following situations in the protocol in which $K_i \neg \conflict(s)$
holds. 
\begin{itemize}
\item If $C_0 = 0$ or $C_0 = 1$, 
then the agent knows there is no conflict on any slot. Note that in this case there are at least $n-1$ agents 
who are requesting the at least $n-1$ distinct slots with reservation round result $1$, leaving at most one further agent. 
If this agent had requested any of the slots with round result $1$, this would have caused a 2-way reservation clash, 
contradicting the observed round result of $1$. Hence this agent did not request any slot, 
and {\em all} slots are conflict-free. 

\item If $C_0 \geq 2$, then in general, an agent cannot determine
whether or not there is a conflict on any of the reserved slots, 
since there may be a 3-way clash on one of these slots. 
However, in the particular case where $C_0 = 2$ and the agent itself does not request any slot 
(\verb+slot-request =0+) then $n-2$ agents are accounted for by the $n-2$ slots on which 
we see a reservation round result of $1$, and the remaining one agent cannot be assigned 
to ay slot without changing the round result, and hence the count. Hence this 
agent cannot be requesting a slot, and the agent knows that 
all slots are conflict-free.

\item  
Note that if  $C_0 = 2$ or $C_0=3$, and the agent requests a slot but detects a collision at slot reservation time, then 
there must have been at least 2 agents requesting this slot, leaving at most $n-2$ agents for the $n-1$ other slots,
where we see either $n-3$ or $n-4$ slots with reservation result of $1$. This means at least 
$n-1$ or $n-2$ agents are accounted for in total, so the number of agents remaining to contribute to a further 
collision on the remaining $n-1$ other slots is at most 1. This agent can not be assigned to any slot 
without changing the round result for that slot, so it must not be  requesting a slot. Thus, 
all the other $n-1$ slots are collision free.  

\item
The above cases use information from the reservation phase. 
Agents may also be able to deduce that slots are conflict-free as a result of 
information they obtain during the transmission phase.
 If $C_0 = 2$ or $C_0=3$, the agent requests a slot and obtains a reservation round result of 
$1$ for this slot, but then detects a collision at transmission time, then there must have been at least a 
3-way collision on that agent's slot, and by a similar argument to the 
previous case, we deduce that all the other slots are collision free.  

\end{itemize}
These conditions may be captured by the assignment 
 $$
\begin{array}[t]{l}
i. \conflictfree(s)  := C_{0} = 0 \lor C_{0} = 1 \lor (C_0 = 2 \land i.\slotrequest = 0) \lor \\ 
~~~~~((C_0 = 2 \lor C_0 = 3) \land \bigvee_{t=1}^n (s\neq t \land i.\slotrequest = t \land \rr[t] = 0 )) \lor  \\ 
~~~~~((C_0 = 2 \lor C_0 = 3) \land \bigvee_{t=1}^n (s\neq t \land i.\slotrequest = t \land \rr[t] =1 \\
\hspace{2in} \land \rr[n+t] \neq i.\msg))
\end{array} 
 $$
The above formula states several concrete conditions under which the agent
knows there is no conflict on a particular slot $s$. We have verified 
by model checking that for $n=3$, 4, and 5 that, at the end of the protocol, for all slots $s$  
we have $i. \conflictfree(s)  \dimp K_i \neg \conflict(s)$, and conjecture that 
it holds for all $n$.  

We remark that in the case of $C_0 = 0$ or $C_0 = 1$, this information is available to all agents, and it is 
common knowledge\footnote{A fact is common knowledge \cite{HM90} if all agents know it, all agents know that all other agents know it, 
and so on for all levels of iteration of knowledge.}  that all slots are conflict free.  In the other cases, collision 
freedom on a slot may be known to some agents but not to others. For example, consider the 
situation with $n=4$ and where the $\slotrequest$ and $\msg$ values and round results are given
as in Figure~\ref{fig:cfex}(a). 
Here agent 2 sees a reservation collision and two 1's elsewhere, 
so knows that slots 1 and 4 are collision free. However, agent 1 does not know this, 
since the scenario of Figure~\ref{fig:cfex}(b) is consistent from its viewpoint, and here there is a collision on slot 4. 

\begin{figure}[h]
\centerline{
(a)~~
\begin{tabular}{|c|c|c|c|c|} 
\hline 
$i$:  & 1 & 2 & 3 & 4 \\  
\hline
$i.\slotrequest$ & 4 & 3 & 1 & 3\\ 
$i.\msg$ & 1 & 0 & 1 & 0 \\ 
\hline 
$\rr[i]$ & 1 & 0 & 0 & 1\\ 
$\rr[4+i]$ & 1 & 0 & 0 & 1\\ 
\hline
\end{tabular} 
~~~~~(b)~~
\begin{tabular}{|c|c|c|c|c|} 
\hline 
$i$:  & 1 & 2 & 3 & 4 \\  
\hline
$i.\slotrequest$ & 4 & 1 & 1 & 1\\ 
$i.\msg$ & 1 & 1 & 1 & 1 \\ 
\hline 
$\rr[i] $ & 1 & 0 & 0 & 1\\ 
$\rr[4+i]$ & 1 & 0 & 0 & 1\\ 
\hline
\end{tabular} }
\caption{\label{fig:cfex} Collision Freedom is not Common Knowledge} 
\end{figure} 

As mentioned above, we consider in this paper a speculative version of the knowledge-based program, in which an  agent 
transmits its message in its requested slot $s$ in the transmission phase if $\neg K_i \conflict(s)$. 
One could also study a {\em conservative} version, where an agent only transmits if 
$K_i \neg \conflict(s)$. The analysis above shows that this would lead to a much more complicated implementation\footnote{For 
a number of reasons, including the fact that we need an implementation of the knowledge condition at all transmission 
steps, rather than just at the end of the protocol, the above condition is not yet adequate for such an implementation.}, where,
moreover, the agent would transmit only in the low probability case when almost all other agents also have a message to send, and
they happen to pick distinct slots!

Returning to the implementation of the speculative version, 
we need to find the appropriate assignments to the variables $\rcvd0, \rcvd1$ and $\dlvrd$, for which 
we have strong and weak versions. \\

\noindent
{\bf Strong Version:} In this case, reception of a bit $x$ means that the agent knows that some other agent is sending 
that bit $x$. An obvious situation where this is the case is where the agent is not itself sending in the slot, 
the reservation round result is $1$, and the bit $x$ is observed as the round result in the corresponding transmission slot. 
Note that there may still be a collision on that slot, but since the number of agents in the collision is
then odd, at least one must be sending $x$.  As we noted in our previous work \cite{AlBatainehMeyden10}, 
there is another, less obvious, situation when an 
agent can know that another agent is sending a bit $x$ in a slot, viz., when the agent is itself transmitting bit $y$ in that slot and observes that the 
round result for the transmission  is the compliment of $y$. Since the number of other agents in the conflict must be 
even, there must be both another agent sending $0$ and another agent sending $1$ in the slot. 
We have verified by model checking in the case of 3-5 agents that with the assignment
$$
\begin{array}[t]{l}
i.\rcvd x : = \bigvee_{s=1}^n ((i.\slotrequest \neq s \land \rr[s] =1 \land \rr[n+s] =x) ~ \lor \\ 
~~~~~~~~~~~~~~~~~~~~~~~~~(i.\slotrequest =s\land   \rr[s] =1\land  \rr[n+s]\neq i.\msg) )
\end{array} 
 $$
Specification 2 is satisfied in the strong version. 

For the delivery condition, we have verified that the assignment 
$$ 
\begin{array}{l} 
\dlvrd := (\slotrequest \neq 0 \land (C_0 = 0\lor C_0 = 1) ) \lor \\ 
~~~~~         (\slotrequest \neq 0 \land  \msg = 1 ~\land \\
\hspace{1in}   \bigvee_{s\neq t,~s,t =1..n} (\rr[s] = \rr[t] = 1 \land \rr[n+s] = \rr[n+t] = 1)) \lor  \\         
~~~~~          (\slotrequest \neq 0 \land \msg = 0~ \land \\
\hspace{1in}  \bigvee_{s\neq t,~s,t =1..n} (\rr[s] = \rr[t] = 1 \land  \rr[n+s] = \rr[n+t] = 0)) 
\end{array} 
$$ 
works for Specification 3 in the strong version for the cases $n$=3-5. 
The intuitions for this formula are as follows. In the case $C_0 = 0\lor C_0 = 1$, 
as discussed above, it is common knowledge that all slots are conflict-free, so all transmissions are 
guaranteed to be delivered. As just noted, an agent who is not sending on a slot 
receives the value transmitted on that slot. However, an agent sending on a slot, and not noticing 
a clash on the transmission, considers it possible that there are other agents transmitting the
very same value on that slot, and these will not know that there is another agent transmitting on the 
slot. However, if there are at least two distinct reserved slots where that value is transmitted, then 
each receives the value from some slot other than the one  on which it transmits. This is expressed in the remainder of
the formula. \\

\noindent
{\bf Weak Version:} 
In the weak interpretation, we require only that a receiver learn that someone, possibly themselves is sending 
a message. The problem of undetected collsions in the transmission phase does not arise here, 
and the implementation is more straightforward. 
We have verified in the 3-5 agent settings that the following assignments work: 
$$\rcvd x := (\slotrequest \neq 0 \land \msg = x) \lor   \bigvee_{s=1}^n (\rr[s] = 1 \land \rr[n+s] =x)$$ 
$$\dlvrd :=  \slotrequest \neq 0 \land  \bigvee_{s=1}^n (\rr[s] = 1 \land \rr[n+s] = \msg)$$ 
Intuitively, in this case, an agent's own message counts as a delivery, and 
messages observed  on reserved slots can be taken at face value. 

Finally, the anonymity property, Specification 4, has been verified to hold in all the implementations obtained
from the assignments discussed above, when $n = 3-5$. 

\section{Related Work}\label{sec:related} 

Abstractions of the kind we have studied, relating a  protocol  
involving a trusted third party to a
protocol that omits the trusted third party, are often used in  
theoretical studies to specify the
objectives of a multi-party protocol. One example where this is done  
in a formal methods setting
is work by Backes et al \cite{BMM10}, who study the abstraction of pi- 
calculus programs based on
multi-party computations. Where we consider a model checking approach  
to verification, with an expressive epistemic 
specification language, they use a type-checking approach.  Their  
notion  of abstraction is richer than the bisimulation-based approach  
we have taken,
in that they also deal with probabilistic and computational concerns.
However, as we have noted, we are interested in the preservation of a  
set of epistemic
properties (nested knowledge formulas) that is richer in some  
dimensions than is usually considered in this literature.
Our abstraction result could be easily strengthened to incorporate  
probability, as was done for
a secure channel abstraction by van der Meyden and Wilke  
\cite{MeydenWilke07}.
However computational complexity issues mesh less well with
epistemic logic, and developing a satisfactory solution to this  
remains an open problem.

Epistemic model checking is less developed than model checking for  
temporal logic, and
many possible optimization techniques remain to be explored for this  
field. Other approaches
using abstraction in the context of epistemic model checking include
\cite{CohenDLR09,CohenDLQ09}. These works are orthogonal to ours in that
where we are concerned with an abstraction of a particular primitive  
(the Dining Cryptographers protocol), that works for all formulas,  
they are concerned with symmetry reductions or deal with a more  
general class of programs than we have considered,
but need to restrict the class of formulas preserved by the abstraction.

Other model checkers for the logic of knowledge are under development
but MCK remains unique in supporting the perfect recall semantics for  
knowledge
using symbolic techniques.  DEMO \cite{demo} implicitly deals with  
perfect recall,
but is based on a somewhat different logic (epistemic update logic),  
and uses explicit state model checking techniques,
so it is not clear if it could be used for the type of analysis
and scale of programs we have considered in this paper.
MCMAS \cite{mcmas}, MCTK \cite{mctk} and  VERICS \cite{verics} are  
based on the observational semantics
for knowledge (which is also supported in MCK).

It is possible in some cases to represent the perfect
recall semantics using the observational semantics (essentially by  
encoding all history variables into the
state) and this approach is used in \cite{LSGWY} to analyse the same 2- 
phase protocol as
we considered in this paper.
However, this modelling is ad-hoc and the transformation from perfect  
 recall to observational semantics is
handled manually, making it susceptible to missing timing channels if  
not done correctly.
(Moreover, we did briefly experiment with such a modeling for the  
large programs studied in this paper, but
found that the perfect recall model checking algorithms outperform the  
observational semantics
model checking algorithm on these programs.)
The work of \cite{LSGWY} does not view the protocol as a knowledge- 
based program, as we have done,
nor do they consider abstraction.

Knowledge-based programs have been applied successfully in a number of  
applications such as
distributed systems, AI, and game theory. They have been used in  
papers such as
\cite{DM86,Had87,HZ92,BaukusM04,NT93} in order to help in the  
design of new protocols or to clarify
the understanding of existing protocols.
Examples of the development of standard programs from knowledge-based  
programs
can be found in \cite{APPG88,DM86,SR86}.
The approach described in these papers is different from the one we  
discussed here in that it is done by pencil and paper analysis and proof.
Examples of the use of epistemic model checkers to identify  
implementations of knowledge-based programs
remain limited. One is the  work of Baukus and van der Meyden  
\cite{BaukusM04} who use MCK
to analyze several protocols for the cache coherence problem using  
knowledge-based framework.

The 2-phase protocol has been implemented in the Herbivore system  
\cite{Herbivore}, which elaborates it
with protocols allowing agents to enter and exit the system, as well  
as grouping agents in
anonymity cliques for purposes of effciency. Variants of the protocol  
have also been considered by
Pfitzman and Waidner \cite{WP}.
These would make interesting case studies for future applications of  
our approach.

\section{Conclusion}\label{sec:concl} 

We have established the soundness of an abstraction for 
of protocols based on the Dining Cryptographers, and applied this result to 
optimize epistemic model checking of protocols that use Dining Cryptographers
as a primitive. Our experimental results clearly demonstrate that the abstraction 
yields efficiency gains for epistemic model checking in interesting examples. In particular, 
we have used these gains to extend an analysis of a knowledge-based program for the 
2-phase protocol, and derived some interesting conclusions about the subtle information 
flows in the protocol. Several research directions suggest themselves as a result of this work. 
One is to complete the analysis of the knowledge-based program for all numbers of agents. 
We conjecture that our present implementation can be shown to work for all numbers of agents, 
and it would be interesting to have a proof of this claim: this would have to be done manually 
rather than by model checking, unless an induction result can be found for the model checking 
approach.  Another direction is to consider richer extensions of the 2-phase protocol, 
addressing issues such as messages longer than a single bit, 
agent entry and exit protocols, as well as adversarial concerns such as 
collusion, cheating and disruption of the protocol. We hope to address these in future work.

{\bf Acknowledgments:} Thanks to Xiaowei Huang and Kai Englehardt for comments on an earlier version of the paper. 

\bibliographystyle{plain}
\bibliography{references}

\end{document}